% SIAM Article Template
\documentclass[onefignum,onetabnum]{siamonline220329}
%** add "review" to class above

% Information that is shared between the article and the supplement
% (title and author information, macros, packages, etc.) goes into
% ex_shared.tex. If there is no supplement, this file can be included
% directly.

% SIAM Shared Information Template
% This is information that is shared between the main document and any
% supplement. If no supplement is required, then this information can
% be included directly in the main document.

% Packages and macros go here
\usepackage{lipsum}
\usepackage{amsfonts}
\usepackage{amsfonts, amsmath, bbold, amssymb}
\usepackage{graphicx}
\usepackage{epstopdf}
\usepackage{algorithmic}
\ifpdf
  \DeclareGraphicsExtensions{.eps,.pdf,.png,.jpg}
\else
  \DeclareGraphicsExtensions{.eps}
\fi

\newcommand{\out}[2]{\mathbb{1}_{\mathcal{X}}\left( \{#1\},#2\right)}
\newcommand{\outbar}[2]{\mathbb{1}_{\bar{\mathcal{X}}}\left( \{#1\},#2\right)}

% Prevent itemized lists from running into the left margin inside theorems and proofs
\usepackage{enumitem}
\setlist[enumerate]{leftmargin=.5in}
\setlist[itemize]{leftmargin=.5in}

% Add a serial/Oxford comma by default.

% Used for creating new theorem and remark environments
\newsiamremark{remark}{Remark}
\newsiamremark{example}{Example}
\newsiamremark{hypothesis}{Hypothesis}
\crefname{hypothesis}{Hypothesis}{Hypotheses}
\newsiamthm{claim}{Claim}

% Sets running headers as well as PDF title and authors
\headers{Mathematical Foundations of Data Cohesion}{Katherine E. Moore}

% Title. If the supplement option is on, then "Supplementary Material"
% is automatically inserted before the title.
%*** ADD THIS BACK IN
\title{Mathematical Foundations of Data Cohesion}%\thanks{Submitted to the editors ***, **, 2023.
%\funding{This work was funded by the Fog Research Institute under contract no.~FRI-454.}
%}%}

% Authors: full names plus addresses.
\author{Katherine E. Moore\thanks{Department of Mathematics and Statistics, Amherst College, Amherst, MA
  (\email{kmoore@amherst.edu}).}}

\usepackage{amsopn}

%%% Local Variables: 
%%% mode:latex
%%% TeX-master: "ex_article"
%%% End: 

% Optional PDF information
\ifpdf
\hypersetup{
  pdftitle={Mathematical Foundations of Data Cohesion},
  pdfauthor={K. E. Moore}
}
\fi

% The next statement enables references to information in the
% supplement. See the xr-hyperref package for details.

% FundRef data to be entered by SIAM
%<funding-group specific-use="FundRef">
%<award-group>
%<funding-source>
%<named-content content-type="funder-name"> 
%</named-content> 
%<named-content content-type="funder-identifier"> 
%</named-content>
%</funding-source>
%<award-id> </award-id>
%</award-group>
%</funding-group>

\begin{document}

\maketitle

% REQUIRED
\begin{abstract}  Data cohesion, recently introduced in \cite{pald22} and inspired by social interactions, uses distance comparisons to assess relative proximity.  In this work, we provide a collection of results which can guide the development of cohesion-based methods in exploratory data analysis and human-aided computation.  Here, we observe the important role of highly clustered ``point-like" sets and the ways in which cohesion allows such sets to take on qualities of a single weighted point.  In doing so, we see how cohesion complements metric-adjacent measures of dissimilarity and responds to local density.  We conclude by proving that cohesion is the unique function with (i) average value equal to one-half and (ii) the property that the influence of an outlier is proportional to its mass.  Properties of cohesion are illustrated with examples throughout.  
\end{abstract}

% REQUIRED
\begin{keywords}
dissimilarity comparisons, clustering, topological data analysis, human-aided computation
\end{keywords}

% REQUIRED
\begin{MSCcodes}
05C82, 62H30, 91D30
\end{MSCcodes}

\section{Introduction}
%The introduction introduces the context and summarizes the
%It is importantly to clearly state the contributions of this piece of work. 

Data cohesion, recently introduced in \cite{pald22}, is a measure of relative proximity inspired by human-social interactions.  In that initial work, from input distance information, cohesion was used to define weighted networks which reveal structural information and, with a simple threshold for distinguishing strong and weak ties, obtain community clusters.  It is also observed in plots of cohesion against distance that cohesion transforms distances in a way that allows one to detect analogous structure occurring in regions with differing  local density (i.e., average within-region distance). Notably, cohesion does not require the use of localizing parameters nor distributional assumptions.

In this work, we establish many properties of cohesion that can help guide the development of cohesion-based methods in exploratory data analysis and human-aided computation.  Specifically, we begin by quickly observing that cohesion considers only the information obtained from dissimilarity comparisons of the form $d(x, y) <_? \text{min}\{d(x, z), d(y, z)\}$.  Then, as a consequence of working with such comparisons alone, cohesion permits sets that are highly concentrated or ``compact" to take on qualities of a single weighted point.  Throughout, we see that cohesion provides an alternative measure of relative proximity with behavior quite different from metric-adjacent measures of distance (or dissimilarity). Connections with the problem of clustering and related work in topological data analysis suggest the value of a measure with these properties.

In this paper, we introduce ``point-like" partitions to provide a new prototypical example for clustering that permits varying average within-cluster distance (or local density) and cluster size.  Then, by generalizing cohesion to take weighted responses to triplet comparisons, and making a slight modification to the original definition, we are able to give straightforward statements of the properties that follow.  For instance, we show that the values of cohesion are constant between distinct point-like sets; this is one aspect of the property that point-like sets take on qualities of a single weighted point.  We then observe that, when an outlier is added to the set, the cohesion among the non-outlier points increases by the weight of the outlier; we say ``the influence of an outlier is proportional to its mass." In particular, unlike metric-adjacent measures of dissimilarity, the cohesion between points is influenced by the mass of surrounding points.  We conclude by proving that cohesion is the unique function with the property that (i) the average value is always equal to one-half and (ii) the influence of an outlier is proportional to its mass.  Examples throughout guide intuition for these and other properties of cohesion.

In \Cref{sec:dissimilarity} we introduce the concept of ``point-like" sets and draw a connection with the property of consistency of a clustering algorithm.  Then, in \Cref{sec:tripletspace}, we define triplet comparison spaces to provide a more general framework that permits weighted responses to dissimilarity comparison queries.  
In \Cref{sec:cohesion}, we define the cohesion function in the more general setting and establish several basic properties.  In  \Cref{sec:quotient}, by considering an appropriate quotient space, we obtain results which highlight how point-like sets take on properties of a single weighted point and the manner in which cohesion accounts for varying density.  In \Cref{sec:uniqueness}, we consider the influence of outliers and prove the uniqueness result for cohesion.   Throughout, we see the ways that cohesion provides structural information that complements that provided by metric-adjacent measures of dissimilarity.  The results provided here can help facilitate the development of cohesion-based methods that leverage information provided by this new measure.

\section{Background and Notation}\label{sec:background}

Throughout, when we consider dissimilarity spaces, \\ $(\mathcal{X}, d, p)$,  we will suppose that  $\mathcal{X} = \{x_1, x_2, \dots, x_n\}$ is a finite set; $d: \mathcal{X}^2 \rightarrow \mathbb{R}^{\ge 0}$ is a symmetric dissimilarity (or distance) function on $\mathcal{X}$ which satisfies $d(x, x) < d(x, y)$ for all $x,y\in \mathcal{X}$ with $y \neq x$; and $p$ is a probability mass function on $\mathcal{X}$.  When $p$ is omitted, we use the uniform distribution, $p = \frac{1}{n}$.   Throughout, given $S \subseteq \mathcal{X}$, we write $m(S) = \sum_{x \in S} p_x$ to denote the probability mass (or weight) of $S$.   For clarity of exposition, we will assume that distinct elements are never exactly the same distance from any other element, that is, $d(x, y) = d(x, z)$ if and only if $y= z$.    Uncertainty in the evaluation of dissimilarity comparisons, including pairs at equal distances, can be handled in a principled way using the perspective introduced in \Cref{sec:tripletspace} (see \Cref{def:Ind_Xd}).  Notably, as we will not directly access the scale of dissimilarity, there is no metric requirement on $d$.  Further, it is only in establishing a uniqueness result in \Cref{thm:uniqueness} that we will use that, in $(\mathcal{X}, d)$, the elements of $\mathcal{X}$ can be ordered according to their dissimilarity from a fixed $x \in \mathcal{X}$.

In the case of $(\mathcal{X}, d)$ as above, the cohesion function, introduced in \cite{pald22}, is as follows.  First, 
for $x, y \in \mathcal{X}$ with $y \neq x$, the $(x, y)$-\textit{local set}, denoted $\mathcal{U}_{x, y}$, (see  \Cref{fig:pointlikeh}) is defined by $$\mathcal{U}_{x, y} = \{ z \in \mathcal{X} |  \min\{d(x, z), d(y, z)\} < d(x, y)  \}.$$
Now, for fixed $x, w \in \mathcal{X}$,  select $Y \in \mathcal{X} \setminus \{x\}$ uniformly at random and then select $Z \in \mathcal{U}_{x,y}$ uniformly at random.  Then the cohesion of $w$ to $x$, denoted $C_{x, w}$ is
$$C_{x, w} = \mathbf{P}(Z = w \text{ and } d(Z, x) < d(Z, Y)) = \frac{1}{n-1} \sum_{y \in \mathcal{X} \setminus \{x\}} \frac{\mathbf{1}(d(w, x) < d(w, y) \text{ and } w \in \mathcal{U}_{x, y} )}{\# \mathcal{U}_{x, y}},$$
where $\mathbf{1}(S)$ is the indicator function for the statement $S$.    The cohesion function is not symmetric in its arguments. 

Cohesion arose from a measure of local (community) depth, also introduced in \cite{pald22}, and given by $\ell(x) = \sum_{w \in \mathcal{X}} C_{x, w}$.  That is, cohesion is \textit{partitioned local depth} and their associated algorithm is referred to as PaLD.   The theoretical results presented in the remainder of this paper can directly provide insight into this related measure of local depth.  

We will first consider the setting of distance comparisons before generalizing to triplet comparison spaces.  When we return to cohesion in \Cref{sec:cohesion}, we will give a definition of cohesion this more general setting and slightly modify the original definition to allow us to write uncomplicated statements of its properties.

\section{Dissimilarity Comparisons and Point-Like Sets}\label{sec:dissimilarity}

We begin with the input setting of $(\mathcal{X}, d)$ and define the ``point-like" sets that will play an important role throughout.  We also draw connections with the property of consistency of a clustering algorithm with respect to transformations that shrink within-cluster distances \cite{kleinberg2002impossibility}.

For many complex data types, it can be challenging to produce an absolute measure of dissimilarity (or distance) for objects in a collection. The dissimilarity comparison framework used throughout can provide an alternative and less restrictive input type. Dissimilarity comparisons are often collected within triplets of points; the three possible forms are standard queries $d(x, y) <_? d(x, z)$ as in \cite{ali2022efficient, amid2015multiview, haghiri2017comparison, tamuz2011adaptively, ukkonen2017crowdsourced,  van2012stochastic};  central queries $d(x, y) <_? \max\{d(x, z), d(y, z)\}$ as in \cite{ kleindessner2017lens}; and outlier queries $d(x, y) <_? \min\{d(x, z), d(y, z)\}$  as in \cite{heikinheimo2013crowd}.  Queries of the form $d(x, y) <_? d(w, z)$, which allow one to detect some degree of density variation have also been considered in \cite{ghoshdastidar2019foundations, terada2014local}.

Some related work in this setting include a method for learning a distance metric  \cite{schultz2003learning} and kernel function \cite{kleindessner2017kernel, tamuz2011adaptively}; obtaining a low-dimensional Euclidean embedding \cite{van2012stochastic}; measuring centrality and data depth \cite{
heikinheimo2013crowd, kleindessner2017lens, rendsburg2021comparison}; determining near neighbors  \cite{haghiri2017comparison}; and performing hierarchical \cite{ghoshdastidar2019foundations} and correlation clustering \cite{ukkonen2017crowdsourced}.  Many of the above are motivated in part by human-aided computation in which query responses are  crowdsourced.  Methods for filling in missing triplet information are proposed in \cite{ali2022efficient, van2012stochastic}; for incorporating weighted responses in \cite{ kleindessner2017kernel, mojsilovic2019relative}; and for efficiently collecting similarity comparisons in \cite{wilber2014cost}.  Additional related work highlights the utility of rank-based methods for applications in which the similarity information, or even the sets themselves, are obtained from from multiple sources \cite{baron2021partitioned, darling2023rank}. 
 
Our focus will be on outlier-type dissimilarity comparisons because they allow certain sets that are highly concentrated or ``compact" to take on qualities of a single weighted point.  We now introduce the concept of point-like sets, see also \Cref{fig:pointlikeh}.  

\begin{definition}\label{def:pointliked} In the setting of $(\mathcal{X}, d)$, we say that a set $X \subseteq \mathcal{X}$ is \emph{point-like} if for any $x, x' \in X$ and $y, z \in \mathcal{X}$ (not both in $X$), we have 
$d(x, y) < \min\{d(x, z), d(y, z)\}$ if and only if $d(x', y) < \min\{d(x', z), d(y, z)\}$.
 If $\mathbf{X} = \{X_1, X_2, ..., X_N\}$ is a partition of $\mathcal{X}$ such that each $X_i$ is point-like, we say that the partition itself is point-like.   \end{definition}

 In other words, point-like sets are those for which the response to comparison queries (with those outside the set) does not depend on which representative of the set is taken.   Note that $\mathbf{X} = \{\{x_1\}, \{x_2\}, \dots, \{x_n\}\}$ and $\mathbf{X} = \{\mathcal{X}\}$ are (trivial) point-like partitions of $\mathcal{X}$.  To see why, in the setting of triplet comparisons, point-like sets must be defined in terms of outlier-type queries, suppose that $x, x'$ are nearly identical and $y$ is clearly distinct (i.e.,  $d(x, x') \ll d(x, y), d(x'y)$).  In considering the three possible forms of triplet comparisons, we note that the truth values of $d(y, x) <_? d(y, x')$ (standard) and $\max\{d(y, x'), d(x,  x') \} <_? d(y, x)$ (central) depend on the ranking of $d(x, y)$ and $d(x', y)$.  It is only outlier-type queries that do not force a distinction between nearly identical elements.  

\begin{figure}[htbp]
  \centering
  \label{fig:pointlikeh}\includegraphics[scale = .8]{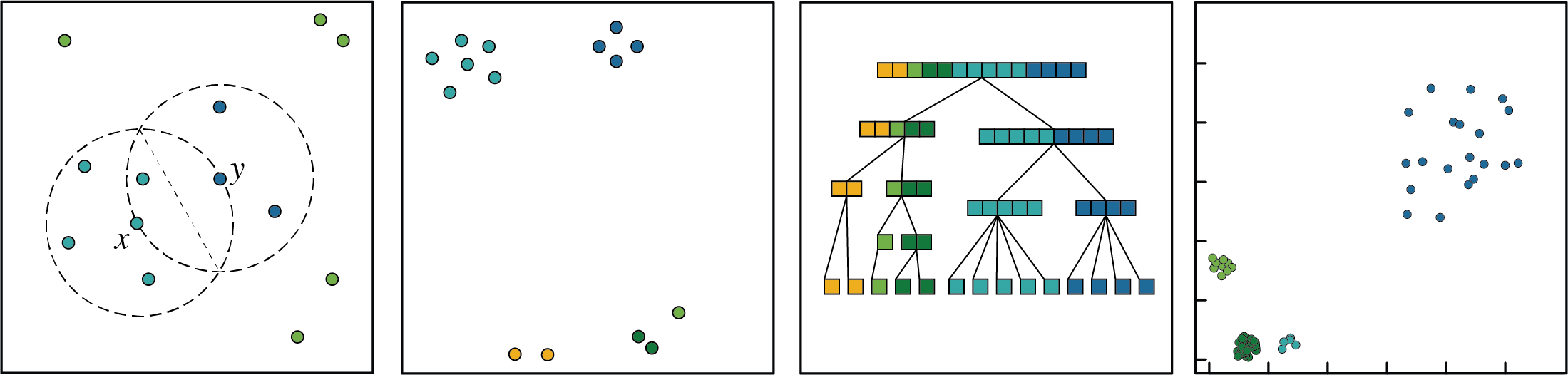}
  \caption{On the right, an $(x, y)$-local set, $U_{x, y}$, for a few points in $\mathbb{R}^2$ with the Euclidean distance; the light blue points are the points, $z$, for which $d(x, z) < \text{min}\{d(x, y), d(y, z)\}$.  Then, a small data set of points with point-like structure at various scales and an illustration of the collection of all point-like sets arranged to highlight the associated partial ordering.  On the right, a set with four point-like sets. Varying density among groups, as in this set, poses challenges for purely distance-based clustering methods such as $k$-means and hierarchical clustering.}
\end{figure}

As seen in \Cref{fig:pointlikeh}, a given space can have point-like structure at varying scales.  For instance, that the number of clusters of a given set may be ambiguous can be argued by arranging points in such a way that there are multiple non-trivial point-like partitions.  The collection of all point-like subsets imposes a hierarchical structure on $\mathcal{X}$ in the following sense.  

\begin{proposition}\label{prop:nested} Given $(\mathcal{X}, d)$, the collection, $\mathcal{C}$, of all point-like sets of $\mathcal{X}$, is partially ordered under subset containment.   That is,  if $X, Y \in \mathcal{C}$,  and $X \cap Y \neq \emptyset$, then either $X \subseteq Y$ or $Y \subseteq X$.  
\end{proposition}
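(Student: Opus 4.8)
The plan is to argue by contradiction, exploiting the fact that \Cref{def:pointliked} is phrased purely in terms of outlier-type comparisons. Suppose $X,Y\in\mathcal{C}$ with $X\cap Y\neq\emptyset$ but neither $X\subseteq Y$ nor $Y\subseteq X$. Then I can pick $a\in X\setminus Y$, $b\in Y\setminus X$, and $c\in X\cap Y$; these three points have pairwise-distinct membership patterns with respect to $(X,Y)$, so they are pairwise distinct and, by the standing no-ties hypothesis, the three numbers $d(a,b),d(a,c),d(b,c)$ are pairwise distinct. It is convenient to say that a point $u$ is an \emph{outlier of a pair} $\{v,v'\}$ when $d(v,v')<\min\{d(u,v),d(u,v')\}$, i.e. $u$ lies strictly farther from each of $v,v'$ than $v,v'$ lie from one another (equivalently, for $u\notin\{v,v'\}$, when $u\notin\mathcal{U}_{v,v'}$). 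In this language, \Cref{def:pointliked} asserts: if $Z$ is point-like, $z,z'\in Z$, and $w,u\in\mathcal{X}$ are not both in $Z$, then $u$ is an outlier of $\{z,w\}$ if and only if $u$ is an outlier of $\{z',w\}$.

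The key step is to feed the right triples into this equivalence. First I would observe that if $w\notin Z$ then the clause ``$w,u$ not both in $Z$'' is automatic for every $u\in\mathcal{X}$; taking $u=z$ then makes ``$u$ is an outlier of $\{z,w\}$'' read $d(z,w)<\min\{d(z,z),d(w,z)\}$, which is false since $d(z,z)<d(z,w)$. Hence ``$z$ is an outlier of $\{z',w\}$'' is false as well, i.e. $z$ is \emph{not} an outlier of $\{z',w\}$ (the no-ties hypothesis makes the resulting inequality strict). Applying this with $Z=X$, the pair $\{a,c\}\subseteq X$, and $w=b\notin X$: reading it with $z=a$ shows $a$ is not an outlier of $\{c,b\}$, and with $z=c$ shows $c$ is not an outlier of $\{a,b\}$. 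Applying it once more with $Z=Y$, the pair $\{b,c\}\subseteq Y$, and $w=a\notin Y$, read with $z=b$, shows $b$ is not an outlier of $\{a,c\}$. Thus none of $a,b,c$ is an outlier of the pair formed by the other two.

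This is the contradiction: among any three distinct points there must be such an outlier. Let $d(p,q)$ be the (strict) minimum of $d(a,b),d(a,c),d(b,c)$ and let $r$ be the remaining point; then $d(p,q)<d(p,r)$ and $d(p,q)<d(q,r)$, so $r$ is an outlier of $\{p,q\}$, contradicting the previous paragraph. Hence $X\subseteq Y$ or $Y\subseteq X$, which is the nested (laminar) property claimed for $\mathcal{C}$. I expect the only real care needed is bookkeeping with \Cref{def:pointliked} --- verifying the ``not both in $Z$'' clause for the chosen $w,u$ and using the no-ties assumption to turn each ``$\le$'' into a strict ``$<$'' --- rather than anything conceptually hard; the substance is simply that outlier-only comparisons cannot tell apart two overlapping point-like groups unless one nests inside the other, which is exactly the hierarchical picture in \Cref{fig:pointlikeh}.
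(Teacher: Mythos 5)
Your proof is correct and follows essentially the same route as the paper's: both argue by contradiction from the same three witnesses $a\in X\setminus Y$, $b\in Y\setminus X$, $c\in X\cap Y$, and both work by applying \Cref{def:pointliked} to the triangle they form with a degenerate choice of the free points. The only difference is which consequence of the definition is extracted: the paper uses that any point outside a point-like set is an outlier of every pair inside it, which immediately yields the directly contradictory inequalities $d(c,a)<d(c,b)$ and $d(c,b)<d(c,a)$, whereas you use the complementary fact that no member is an outlier of a member/non-member pair and therefore need the extra (correct, but slightly longer) observation that in a triangle with pairwise-distinct side lengths some vertex must be an outlier of the opposite pair.
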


\begin{proof}
 Toward a contradiction, suppose that $X$ and $Y$ are point-like sets satisfying $X \not \subseteq Y$, $Y \not \subseteq X$, and $X \cap Y \neq \emptyset$.    In particular, there are distinct points satisfying $x \in X \setminus Y$, $y \in Y \setminus X$ and $z \in X \cap Y$.  Since $X$ is point-like, $x, z \in X$ and $y \in \mathcal{X} \setminus X$,  it follows that $d(z, x) < d(z, y)$.  On the other hand, since $Y$ is point-like and $y, z \in Y$ and $x \in \mathcal{X} \setminus Y$, we have $d(z, y) < d(z, x)$, thereby contradicting the previous inequality.  
\end{proof}

Point-like partitions are an example of a highly clustered set and many sets with apparent cluster structure will not have a non-trivial point-like partition.  Nevertheless,  sets with non-trivial point-like partitions provide a prototypical example from which to consider properties of clustering algorithms and will play an important role in the remainder of this paper.  

One may hope that a clustering algorithm does not (unnecessarily) divide point-like sets.  The preservation of such sets, however, is not the focus of many traditional clustering algorithms that use absolute (rather than relative) distances throughout.  As one example, since $k$-means minimizes the sum of squared-distances from centroids, the algorithm may break apart spread-out point-like sets, as in \Cref{fig:pointlikeh}. Hierarchical methods and others that do not attempt to account for varying local density can also be challenged by similar arrangements.

The consistency of clusters under certain types of transformations (e.g., those that shrink within-cluster distances) is closely related to an idea considered by Kleinberg \cite{kleinberg2002impossibility}.   Given $(\mathcal{X}, d)$ and partition $\Gamma = \{G_1, G_2, ..., G_N\}$ of $\mathcal{X}$ induced by a given clustering algorithm, one then supposes that the associated dissimilarities are transformed in such a way that (a) $d'(x_i, x'_i) \leq d(x_i, x'_i)$ for any $x_i, x'_i \in G_i$; and (b) $d'(x_i, x_j) \ge d(x_i, x_j)$ for any $x_i \in G_i$ and $x_j \in G_j$ where $i \neq j$.  The property of \textit{consistency} (of the associated clustering algorithm) states that when the clustering algorithm is applied to $(\mathcal{X}, d')$, it must also yield the cluster partition, $\Gamma$.  Kleinberg's impossibility result suggests that consistency may be, in a sense, too strict in that any consistent clustering algorithm cannot also satisfy both scale invariance and richness (i.e., that any partition is achievable by some configuration of points).

The comparison framework gives us a nice way to narrow down the types of clusters and transformations under which we might desire some degree of consistency.  We show below that outlier-type comparisons are exactly those that are preserved under transformations which uniformly shrink distances within point-like sets.

\begin{definition}\label{def:Xtransf} Given $(\mathcal{X}, d)$ and an associated partition $\mathbf{X} = \{X_1, X_2, \dots, X_N\}$,  we say that $d': \mathcal{X}^2 \rightarrow \mathbb{R}$ is obtained from an \emph{$\mathbf{X}$-transformation of $d$} if there are constants $\alpha_i \in (0, 1]$, for $1 \leq i \leq N$, and $\beta \ge 1$ such that 
$$d'(x, y) = \begin{cases}  \alpha_i d(x, y) & \text{ if } x, y \in X_i \\
\beta d(x, y) & \text{ if } x \in X_i \text{~and~} y \in X_j \text{ for some } i \neq j.  \end{cases}$$
\end{definition} 

As we show in the next proposition, algorithms (such as cohesion) built on outlier-type comparisons alone are consistent with respect to $\mathbf{X}$-transformations for any point-like partition, $\mathbf{X}$.

\begin{proposition}\label{prop:shrinking} Suppose $(\mathcal{X}, d)$ and $\textbf{X}$ is an associated point-like partition.  If $d'$ is an $\mathbf{X}$-transformation of $d$, then for any $x, y, z \in \mathcal{X}$, 
\begin{equation} \label{eq:consistent} d(x, y) < \min\{d(x, z), d(y, z) \} \text{ if and only if } d'(x, y) < \min\{d'(x, z), d'(y, z) \}.\end{equation} 
Consequentially, $\mathbf{X}$ is a also point-like partition of $(\mathcal{X}, d')$.  
\end{proposition}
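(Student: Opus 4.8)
The plan is to verify the iff in \eqref{eq:consistent} by a direct case analysis on how the three points $x,y,z$ are distributed among the blocks of the point-like partition $\mathbf{X}$. Writing $d'$ in terms of $d$ via the scaling constants $\alpha_i \in (0,1]$ and $\beta \ge 1$, the strict inequality $d(x,y) < \min\{d(x,z), d(y,z)\}$ is equivalent to the conjunction $d(x,y) < d(x,z)$ and $d(x,y) < d(y,z)$, so it suffices to track what happens to each of these two pairwise comparisons under the transformation. The key observation is that a comparison $d(a,b) <_? d(c,e)$ is only distorted when the two sides are scaled by different factors; when both sides get the same multiplier the truth value is preserved, and when they get different multipliers we must argue that the point-like hypothesis already fixes the outcome.

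First I would dispose of the easy configurations. If all three points lie in a single block $X_i$, then every pairwise distance among them is multiplied by the same $\alpha_i$, so \eqref{eq:consistent} is immediate. If the three points lie in three distinct blocks, every distance is multiplied by $\beta$ and again the equivalence is immediate. The substantive case is when exactly two of the three points share a block. Up to the symmetry of the statement in $x$ and $y$ (the right-hand side $\min\{d(x,z),d(y,z)\}$ and the hypothesis on $d(x,y)$ are symmetric in $x,y$), there are two subcases: either $x,y$ lie in the same block $X_i$ and $z$ lies outside it, or $z$ lies in a block together with exactly one of $x,y$ — say $x,z \in X_i$ and $y \notin X_i$.

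In the subcase $x,y \in X_i$, $z \notin X_i$: here $d'(x,y) = \alpha_i d(x,y)$ while $d'(x,z) = \beta d(x,z)$ and $d'(y,z) = \beta d(y,z)$. Since $\alpha_i \le 1 \le \beta$, we have $d'(x,y) = \alpha_i d(x,y) \le d(x,y) \le \beta \min\{d(x,z),d(y,z)\} $ whenever $d(x,y) \le \min\{d(x,z),d(y,z)\}$; a little care with the strictness (using $\alpha_i d(x,y) \le d(x,y) < \min\{\dots\} \le \beta \min\{\dots\}$ when $\alpha_i = 1$ or $\beta = 1$ and $<$ strengthening otherwise) gives the forward direction, and for the converse, if $d(x,y) \ge \min\{d(x,z),d(y,z)\}$ then $d'(x,y) = \alpha_i d(x,y)$ could in principle dip below $\beta\min\{\dots\}$, and this is exactly where the point-like hypothesis is needed: I claim in fact that for a point-like block one never has $d(x,y)$ close to (or larger than) the distance from $x$ or $y$ to an outside point $z$ in a way that could flip — more precisely, I would instead route the converse through the contrapositive together with the other subcase. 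The cleaner organizing principle, which I would actually adopt, is: show that the \emph{outlier relation} ``$d(x,y) < \min\{d(x,z),d(y,z)\}$'' restricted to any triple meeting a point-like block in $\ge 2$ points is determined purely combinatorially (by which points share a block) and hence is manifestly invariant — one uses \Cref{def:pointliked} applied with the two co-block points as $x,x'$ to transfer the relation, and the fact that within a point-like block the outlier relation among three co-block points is vacuous in the relevant direction because of the defining inequality $d(x,x) < d(x,y)$ combined with transitivity-type arguments as in the proof of \Cref{prop:nested}.

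The main obstacle is precisely this subcase analysis when two points share a block and the third does not: one must rule out that the differential scaling ($\alpha_i$ on the intra-block edge, $\beta$ on the cross edges) reorders a comparison, and the only leverage available is \Cref{def:pointliked}, which says the outlier relation does not depend on the representative. I expect the argument to mirror the proof of \Cref{prop:nested}: from point-likeness of $X_i$ one derives a forced inequality (e.g.\ $d(z,x) < d(z,y)$ or the reverse) that is scale-robust because both of its sides are cross-block edges scaled by the same $\beta$, and then combine this with the trivial scaling behavior of the intra-block edge to conclude. Once \eqref{eq:consistent} is established for all $x,y,z$, the final sentence is immediate: applying \eqref{eq:consistent} to points $x,x' \in X_i$ and $y,z \in \mathcal{X}$ (not both in $X_i$) shows the defining equivalence of \Cref{def:pointliked} holds verbatim for $d'$, so each $X_i$ — and hence $\mathbf{X}$ — is point-like in $(\mathcal{X}, d')$.
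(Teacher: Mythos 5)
Your proposal is correct in outline and takes essentially the same route as the paper: a case analysis on how $x,y,z$ sit relative to the blocks, with the two all-same-scale cases dispatched immediately and the mixed cases handled by observing that point-likeness already fixes the truth value of the outlier relation (for $x,y\in X_i$, $z\notin X_i$ it forces $d(x,y)<\min\{d(x,z),d(y,z)\}$; for $x,z\in X_i$, $y\notin X_i$ it forces $d(x,z)<\min\{d(x,y),d(y,z)\}$, hence the relation is false), after which $\alpha_i\le 1\le\beta$ pushes each inequality in the favorable direction. One correction: in the subcase $x,z\in X_i$, $y\notin X_i$ you say the forced inequality is scale-robust because ``both of its sides are cross-block edges scaled by the same $\beta$,'' but the decisive comparison there is $d(x,z)<d(x,y)$, an intra-block edge against a cross-block edge, whose preservation comes precisely from the differential scaling $d'(x,z)\le d(x,z)<d(x,y)\le d'(x,y)$ rather than from equal scaling; with that repaired, your argument coincides with the paper's.
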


\begin{proof} We consider several cases.  (i) When $x, y, z \in X_i$ for some index $i$, \eqref{eq:consistent} follows immediately from \Cref{def:Xtransf}
as all distances have been scaled by $\alpha_i$.  (ii) Suppose now $x, y \in X_i$ and $z \in X_j$ where $j \neq i$.  By \Cref{def:Xtransf}, $d'(x, y) \leq d(x, y)$ and $d(x, z) \leq d'(x, z)$ and $d(y, z) \leq d'(y, z)$.  Since $X_i$ is point-like,  $d(x, y) < \text{min}\{d(x, z), d(y, z)\}$.  Now, $d'(x, y) \leq d(x, y) <   \min\{d(x, z), d(y, z)\} \leq  \min\{d'(x, z), d'(y, z)\}$, and so  \eqref{eq:consistent} holds. 

(iii) Now suppose $x, z \in X_i$ and $y \in X_j$ for some $j \neq i$.  By \Cref{def:Xtransf}, $d'(x, z) \leq d(x, z)$ and $d'(x, y) \ge d(x, y)$ and $d'(y, z) \ge d(y, z)$.  Since $X_i$ is point-like, $\min\{d(x, z), d(y, z)\} =  d(x, z)$ and $d(x, y) > \text{min}\{d(x, z), d(y, z)\}$. Now, 
$$d'(x, y) \ge d(x, y) >  \text{min}\{d(x, z), d(y, z)\} = d(x, z) \ge d'(x, z) \ge \text{min} \{d'(x, z), d'(y, z)\},$$
 and so \eqref{eq:consistent} follows.  
(iv) The case in which $x \in X_i$ and $y, z \in X_j$ for some $j \neq i$ follows an identical argument to (iii).   Lastly, in the case that (v) $x \in X_i$, $y \in X_j$ and $z \in X_k$ for distinct $i, j, k$, \eqref{eq:consistent} follows immediately from \Cref{def:Xtransf} as all distances have been scaled by $\beta$.  
\end{proof}

Note additionally that, since query responses are invariant under $\mathbf{X}$-transformations, the (hierarchical) point-like structure of $\mathcal{X}$ is also preserved.   As a corollary to \Cref{prop:shrinking}, given a collection of triplet responses, it is not in general possible to reconstruct the complete set of dissimilarity relationships (e.g., $d(w, x) <_? d(y, z)$) among all $w, x, y, z \in \mathcal{X}$; see \cite{terada2014local} for related work including some properties of sets in which recovery is possible.

Therefore, provided that sets are sufficiently separated (i.e., point-like), outlier-type dissimilarity comparisons do not observe the magnitude of within-set distances.  It is this property of outlier-type comparisons that allows cohesion to detect analogous structure regardless of the underlying local density, see \Cref{sec:cohesion}. In the next section, we briefly introduce triplet comparison spaces before defining cohesion in this generalized setting.

\section{Triplet Comparison Spaces}\label{sec:tripletspace}

In this section, we introduce triplet comparison spaces and define point-like sets and $(x, y)$-local sets in this setting.  This more general setting is amenable to human-aided computation and other cases in which one might want to provide weighted responses to similarity comparison queries.    The introduced notation will also allow us to provide quick proofs of many properties of cohesion.  

\begin{definition}\label{def:tripfunction} Given a finite set $\mathcal{X}$, a \emph{triplet comparison function}, $\mathbb{1}_{\mathcal{X}}: \mathcal{X}^3 \rightarrow [0, 1]$ is a function satisfying that, for any $x, y, z \in \mathcal{X}$, 

\begin{enumerate}
\item $\mathbb{1}_{\mathcal{X}}(\{x, y\}, x) = 0$ whenever $y \neq x$,  
\item $\mathbb{1}_{\mathcal{X}}(\{x, y\}, z) = \mathbb{1}_{\mathcal{X}}(\{y, x\}, z)$, 
\item $\mathbb{1}_{\mathcal{X}}(\{x, y\}, z) + \mathbb{1}_{\mathcal{X}}(\{x, z\}, y) + \mathbb{1}_{\mathcal{X}}(\{y, z\}, x) = 1$.  
\end{enumerate}
\end{definition} 

  Note that (1) could have been equivalently stated as $\out{x, x}{y} = 1$ whenever $y \neq x$; and (3) forces  $\out{x, x}{x} = \frac{1}{3}$ for each $x \in \mathcal{X}$. 

\begin{definition} A \emph{triplet comparison space} is a triplet, $(\mathcal{X},  \mathbb{1}_{\mathcal{X}}, p)$, in which \\ $\mathcal{X} = \{x_1, x_2, \dots, x_n\}$ is a finite set, $p: \mathcal{X} \rightarrow [0, 1]$ is a probability mass function, and $\mathbb{1}_{\mathcal{X}}: \mathcal{X}^3 \rightarrow [0, 1]$ is a triplet comparison function.  
\end{definition}

Triplet comparison information can be extracted from a dissimilarity space, obtained with the aid of crowdsourcing, or constructed from available similarity information in some other way.

\begin{definition}\label{def:Ind_Xd}  Suppose that $(\mathcal{X}, d)$ is a dissimilarity space.  The \emph{triplet comparison function induced by $d$} is then $\mathbb{1}_{\mathcal{X}}: \mathcal{X}^3 \rightarrow [0, 1]$ where:
$$\mathbb{1}_{\mathcal{X}}(\{x, y\}, z) =  \begin{cases} 1 & \text{ if } d(x, y) < \emph{min}\{d(x, z), d(y, z)\} \\ 
 \frac{1}{3}  & \text{ if } x = y = z \\ 0 & \text{ otherwise}. \end{cases}$$
\end{definition}

 In the case that $\mathcal{X}$ has pairs of points at equal distances, one can resolve ties probabilistically.  For instance, in the case that $d(x, y) = d(x, z) < d(y, z)$, one could take $\out{x, y}{z} = \out{x, z}{y} = \frac{1}{2}$ and $\out{y, z}{x} = 0$.  

In the setting of human-aided computations, given a set $\mathcal{X}$, suppose that for distinct $x, y, z \in \mathcal{X}$ we have collected responses to queries of the form: Among $x, y$ and $z$, which two are most alike?  Then, for distinct $x, y, z \in \mathcal{X}$, take $\out{x, y}{z}$ to be the proportion of times that, when the triplet $\{x, y, z\}$ was displayed, the response was that $x$ and $y$ were most alike (i.e., $z$ is the outlier).  Then, for distinct $x, y \in \mathcal{X}$, take $\out{x, x}{y} = 1; \out{x, y}{x} = 0$ and $\out{x, x}{x} = \frac{1}{3}$.  For triplets $\{x, y, z\}$ that do not have a response, one could resolve ties probabilistically by setting $\out{x, y}{z} = \out{x, z}{y} = \out{y, z}{x} = \frac{1}{3}$ or fill in missing values using transitivity.  

In some applications, it is more natural to instead collect triplet comparisons of the form $d(x, y) <_? d(x, z)$.  In the case that the triplets $d(x, y) < d(x, z)$, $d(z, x) < d(z, y)$, and $d(y, z) < d(y, x)$  (or the other circular direction) are all considered true, we are then forced to consider the three distances equal and thus and allocate a $1/3$ to each outlier-type comparison.  Such ranking may be said to be disconcordant; see \cite{darling2023rank} for related work on concordant ranking systems.  For ease of notation in the next equation, for weighted responses write $p_{x, y, z} = \mathbb{1}(d(x, y) < d(x, z))$.   
We could then take
$$\out{x, y}{z} = p_{x, y, z}p_{y, x, z} + \frac{1}{3}(p_{x, y, z}p_{y, z, x}p_{z, x, y} + p_{x, z, y}p_{y, x, z}p_{z, y, x}).  $$

Although not considered here, one could also use the relative magnitude of dissimilarities to provide weighted responses to outlier-type queries.   Most results in the remainder of this paper are presented in the general setting of triplet comparison spaces.  We next state the definition of point-like sets in this setting.

\begin{definition}\label{def:pointlike1} Given a triplet comparison space, $(\mathcal{X}, \mathbb{1}_{\mathcal{X}}, p)$, a set $X \subseteq \mathcal{X}$ is \emph{point-like} if for any $x, x' \in X$ and $y, z \in \mathcal{X}$ not both in $X$,  we have $\out{x, y}{z} = \out{x', y}{z}$.
\end{definition}

We again note the immediate consequence that if $X$ is point-like, then for any $x, x' \in X$ and $y \in \mathcal{X} \setminus X$, $\out{x, x'}{y} = 1$ and $\out{x, y}{x'} = 0$. The concept of $(x, y)$-local sets can be extended to the setting of $(\mathcal{X}, \mathbb{1}_{\mathcal{X}}, p)$ provided that we allocate the probability mass according to the weight given to the associated similarity responses.

\begin{definition}\label{def:uxy}
Given a triplet comparison space, $(\mathcal{X}, \mathbb{1}_{\mathcal{X}}, p)$,  and $x, y \in \mathcal{X}$,  define the \emph{local mass function}, $U_{\mathcal{X}}: \mathcal{X}^2 \rightarrow [0, 1]$, by:
$$U_{\mathcal{X}}(x, y) = \sum_{z \in \mathcal{X}} (\out{x, z}{y} + \out{y, z}{x})p_z.$$

\end{definition}
The induced $(x, y)$-local set mass function generalizes from $(\mathcal{X}, d)$  using \Cref{def:Ind_Xd},  in the sense that $U_{\mathcal{X}}(x, y) = m(\mathcal{U}_{x, y})$ provided that $x \neq y$; and note $U_{\mathcal{X}}(x, x) = 2p_x/3$.  In describing the cohesion function in the setting $(\mathcal{X}, \mathbb{1}_{\mathcal{X}}, p)$, we use the phrase  ``$(x, y)$-local set" to refer to the fuzzy set of points, $z\in \mathcal{X}$, with membership function equal to $\out{x, z}{y} + \out{y, z}{x}$.

\section{Cohesion}\label{sec:cohesion} We now provide a definition of cohesion in the setting of similarity comparison spaces and show the close relationship with that originally given in \cite{pald22}.  In this section, we also establish a few basic properties of cohesion and include examples throughout.

\begin{definition}\label{def:cohesion} Given $(\mathcal{X}, \mathbb{1}_{\mathcal{X}}, p)$ as above,  define the \textit{cohesion function}, $C_{\mathcal{X}}: \mathcal{X}^2 \rightarrow \mathbb{R}^{\ge 0}$, by
$$C_{\mathcal{X}}(x, w) = \sum_{y \in \mathcal{X}} \frac{\out{x, w}{y}}{U_{\mathcal{X}}(x, y)}p_y = \sum_{y \in \mathcal{X}} \frac{\out{x, w}{y}}{\sum_{z \in \mathcal{X}} (\out{x, z}{y} +\out{y, z}{x})p_z}p_y.$$
\end{definition}

 Note that, in contrast to that in \cite{pald22} (see \Cref{sec:background}),  the sum is taken over all $y \in \mathcal{X}$, and pairs at equal distances are resolved slightly differently.   In the case of $(\mathcal{X}, d, p)$, where $p$ is uniform, the relationship between the two definitions of cohesion is 
\[ \label{eq:pald_c} C_{\mathcal{X}}(x, w) = (n-1)\left(C_{x, w} + \frac{1}{2n}\mathbf{1}(x = w)\right).\]
The minor modifications of the original definition allow uncomplicated statements of the properties given in the remainder of this paper.

An interpretation of cohesion is as follows; see also \cite{pald22} for discussion of the human-social perspective. Given $x \in \mathcal{X}$, considering an opposing point $y \in \mathcal{X}$, we now restrict the domain to the $(x, y)$-local points, re-scaling the mass of each point by $1/U_{\mathcal{X}}(x, y)$ (so that the total mass of the $(x, y)$-local points is 1).  Each $(x, y)$-local point then proportionally allocates its (magnified) mass to the focal point, $x$ or $y$, that it is more similar to.  Aggregating over all possible $y \in \mathcal{X}$,  $C_{\mathcal{X}}(x, w)p_w$ is then the total mass that has been allocated to $x$ by $w$ in this process; and $C_{\mathcal{X}}(x, w)$ is the factor by which the mass of $w$ has been amplified.   In this way, cohesion can be interpreted as the overall influence of $w$'s affinity (or ``support") for $x$ when compared with a (general) opposing point.  

Cohesion is a measure of relative proximity with the property that the average value of cohesion is always equal to $1/2$.  Also observed in \cite{pald22}, this property follows immediately from the symmetry of $(x, y)$-local sets. We give the proof using \Cref{def:cohesion} for completeness.  

\begin{proposition}\label{prop:half} For any $(\mathcal{X}, \mathbb{1}_{\mathcal{X}}, p)$, the (weighted) average value of cohesion over the set $\mathcal{X}$ is $1/2$.  That is, 
$$\sum_{x, w \in \mathcal{X}} C_{\mathcal{X}}(x, w)p_xp_w = \frac{1}{2}.$$
\end{proposition}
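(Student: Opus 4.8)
The plan is to unfold the double sum using \Cref{def:cohesion}, interchange the order of summation so as to sum over $w$ first, and then exploit the (manifest) symmetry of the local mass function $U_{\mathcal{X}}$ in its two arguments through a relabeling trick. This is the same ``symmetry of $(x,y)$-local sets'' observation noted in \cite{pald22}, expressed in the present notation.

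First I would substitute the definition of $C_{\mathcal{X}}(x,w)$ and interchange sums to obtain
$$\sum_{x, w \in \mathcal{X}} C_{\mathcal{X}}(x,w)\, p_x p_w \;=\; \sum_{x, y \in \mathcal{X}} \frac{p_x p_y}{U_{\mathcal{X}}(x,y)} \sum_{w \in \mathcal{X}} \out{x,w}{y}\, p_w.$$
Writing $A(x,y) := \sum_{w \in \mathcal{X}} \out{x,w}{y}\, p_w$, the right-hand side becomes $S := \sum_{x,y \in \mathcal{X}} \frac{A(x,y)}{U_{\mathcal{X}}(x,y)} p_x p_y$. The two facts that make the argument work are both immediate from \Cref{def:uxy}: (a) $U_{\mathcal{X}}(x,y) = A(x,y) + A(y,x)$, and (b) $U_{\mathcal{X}}(x,y) = U_{\mathcal{X}}(y,x)$. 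I would also note in passing that the term $z = x$ contributes $\out{x,x}{y}\,p_x = p_x$ to $U_{\mathcal{X}}(x,y)$ by property (1) of \Cref{def:tripfunction}, so $U_{\mathcal{X}}(x,y) \ge p_x + p_y$ and the denominators are positive on the support of $p$ (while any term with $p_x p_y = 0$ drops out), so there is no issue with the quotient being well defined.

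Then I would relabel the summation indices $x \leftrightarrow y$ in $S$ and use (b) to get $S = \sum_{x,y} \frac{A(y,x)}{U_{\mathcal{X}}(x,y)} p_x p_y$; adding this to the original expression for $S$ and applying (a) yields
$$2S \;=\; \sum_{x,y \in \mathcal{X}} \frac{A(x,y) + A(y,x)}{U_{\mathcal{X}}(x,y)}\, p_x p_y \;=\; \sum_{x,y \in \mathcal{X}} p_x p_y \;=\; 1,$$
so $S = 1/2$. There is no real obstacle here; the only points requiring care are the bookkeeping of which index is being summed (after eliminating $w$, the remaining index plays the role of one endpoint of a local set) and the brief remark ensuring the denominators never vanish where it matters.
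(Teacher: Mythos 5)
Your proof is correct and follows essentially the same route as the paper: substitute \Cref{def:cohesion}, sum out $w$ first, and exploit that $U_{\mathcal{X}}(x,y)=\sum_{w}\bigl(\out{x,w}{y}+\out{y,w}{x}\bigr)p_w$ is symmetric in $(x,y)$. The only cosmetic difference is that your relabeling $x\leftrightarrow y$ handles the diagonal terms uniformly (each contributing $\tfrac{1}{2}p_x^2$ automatically), whereas the paper splits the sum into the pairs $i<k$ and the diagonal $i=k$ and evaluates the latter separately.
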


\begin{proof} Write $\mathcal{X} = \{x_1, x_2, \dots, x_n\}$.  As $U_{\mathcal{X}}(x_i, x_j) =U_{\mathcal{X}}(x_j, x_i)$ for any $x_i, x_j \in \mathcal{X}$ and $\out{x_i, x_j}{x_i} = 0$ when $j \neq i$, we have
\begin{eqnarray*}
\sum_{ i, j } C(x_i, x_j)p_{x_i} p_{x_j} &=& 
\sum_{ i, j, k }\frac{ \out{x_i, x_j}{x_k}}{U_{\mathcal{X}}(x_i, x_k)}  p_{x_i} p_{x_j}p_{x_k} \\
 &=& \sum_{ i, k }\frac{ \sum_j \out{x_i, x_j}{x_k} p_{x_j}}{U_{\mathcal{X}}(x_i, x_k)}  p_{x_i} p_{x_k} \\
&=& \sum_{ i < k } \frac{ U_{\mathcal{X}}(x_i, x_k)}{U_{\mathcal{X}}(x_i, x_k)}p_{x_i} p_{x_k} + \sum_{ i  } \frac{\out{x_i, x_i}{x_i}p_{x_i}}{U_{\mathcal{X}}(x_i, x_i)}p_{x_i} \\
&=& \sum_{i< k} p_{x_i} p_{x_k}   + \sum_{ i } \frac{1}{2} p_{x_i} = \frac{1}{2}.
\end{eqnarray*}
\end{proof}

The extension of cohesion to permit weighted responses allows $w$ to proportionally allocate its mass to the points in the associated $(x, y)$-local sets for which it is a member.  

\begin{example} Suppose $\mathcal{X} = \{\blacksquare,  \bullet, \circ\}$ and among $\blacksquare, \bullet, \circ$, the proportion of responses stating $\blacksquare$ and $\bullet$ are most alike is $\alpha$; and the remaining reply that $\bullet$ and $\circ$ are most alike.  Then, we can take $\out{\blacksquare, \bullet}{\circ} = \alpha$ and $\out{\bullet, \circ}{\blacksquare} = 1- \alpha$.  By  \Cref{def:cohesion}, it follows that
$$C(\blacksquare, \bullet) = \alpha/3 \text{ and } C(\circ, \bullet)= (1- \alpha)/3.$$
\end{example}

Points that are closer to one another typically have larger values of cohesion.  In the following example we see, however, that this is not always the case.  Notice also in this example that cohesion is influenced by the mass of the surrounding points and thus, in this way, cohesion behaves quite differently from distance in the usual metric sense.  

\begin{figure}[htbp]
  \centering
  \label{fig:distcohesion}\includegraphics[scale = .85]{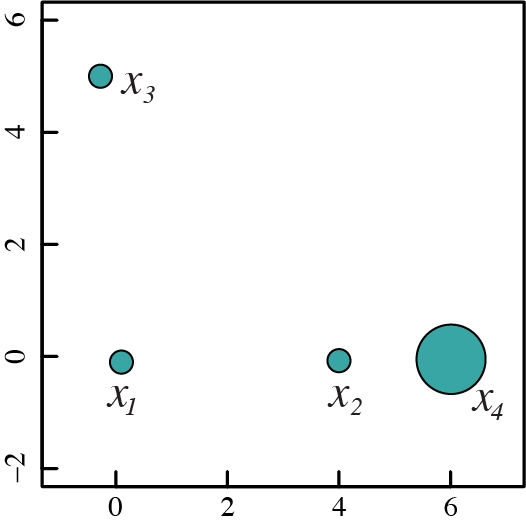}
  \caption{A data set considered in \Cref{ex:ordering} consisting of four weighted points. Provided that the probability mass of $x_4$ is greater than $1/3$, we have $d(x_1, x_2) < d(x_1, x_3)$ and yet $C_{\mathcal{X}}(x_1, x_2) < C_{\mathcal{X}}(x_1, x_3)$.  
As we will see in \Cref{prop:ptlike_calc}, the same phenomena would occur if $x_4$ were replaced with a point-like set.  In contrast to metric-adjacent measures of distance, the cohesion of points are influenced by the weight of others.  
  }
\end{figure}

\begin{example}\label{ex:ordering}  Let $\mathcal{X} = \{x_1, x_2, x_3, x_4\} \in \mathbb{R}^2$ are such that $x_1 = (0, 0), x_2 = (4, 0),  x_3 = (0, 5), x_4 = (6, 0)$; and take $p_{x_4} = p$ and $p_{x_1} = p_{x_2} = p_{x_3} = (1-p)/3$ (see \Cref{fig:distcohesion}).  In particular, such points satisfy $\out{x_1, x_2}{x_3} = \out{x_1, x_3}{x_4} = \out{x_2, x_4}{x_1} = 1$.  Now, since $\out{x_1, x_2}{x_4} = 0$ and $\out{x_1, x_2}{x_1} = \out{x_1, x_2}{x_2} = 0$ (see \Cref{def:tripfunction}), we have

$$C_{\mathcal{X}}(x_1, x_2) =  \frac{\out{x_1, x_2}{x_3}}{U_{\mathcal{X}}(x_1, x_3)}p_{x_3} = \frac{1}{1 - p}\left(\frac{1- p}{3}\right) = \frac{1}{3}.$$
On the other hand, as $\out{x_1, x_3}{x_2} = 0$, and immediately $\out{x_1, x_3}{x_1} = 0 $ and $ \out{x_1, x_3}{x_3} = 0$, we have

$$C_{\mathcal{X}}(x_1, x_3) = \frac{\out{x_1, x_3}{x_4}}{U_{\mathcal{X}}(x_1, x_4)} p_{x_4} = p.$$
Therefore, if we take $p > 1/3$, we have $d(x_1, x_2) < d(x_1, x_3)$ and yet $C_{\mathcal{X}}(x_1, x_2) < C_{\mathcal{X}}(x_1, x_3)$.  This phenomena can also be achieved using the uniform distribution on a set $\mathcal{X}$ provided that we replace $x_4$ with a point-like set with sufficient total probability mass, see \Cref{prop:ptlike_calc} and \Cref{fig:pointlike}.
\end{example}

It is a quick observation that the cohesion of a point to itself is strictly greater than the cohesion from any other point.  That is, for any $x, w \in \mathcal{X}$ with $x \neq w$,  $C_{\mathcal{X}}(x, x) > C_{\mathcal{X}}(x, w)$.  We now show a similar property when $X$ is a point-like set.

\begin{proposition} Given $(\mathcal{X}, \mathbb{1}_{\mathcal{X}}, p)$ and $X \subseteq \mathcal{X}$ is point-like.   Then for any $x, x' \in X$ and $w \in \mathcal{X} \setminus X$ (with  $p_w \neq 0$), $C_{\mathcal{X}}(x, x') > C_{\mathcal{X}}(x, w)$.  
\end{proposition}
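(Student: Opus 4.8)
The plan is to split the sum defining $C_{\mathcal{X}}(x,w)$ in \Cref{def:cohesion} according to whether the opposing point $y$ lies in $X$ or not, observe that every surviving term of $C_{\mathcal{X}}(x,w)$ is dominated by the corresponding term of $C_{\mathcal{X}}(x,x')$, and then extract from $C_{\mathcal{X}}(x,x')$ one extra strictly positive contribution — the term with $y=w$ — that forces the strict inequality.

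First I would record the facts needed. Since $X$ is point-like, the consequence of \Cref{def:pointlike1} noted above gives $\out{x,x'}{y}=1$ and $\out{x,y}{x'}=0$ for all $x,x'\in X$ and $y\in\mathcal{X}\setminus X$; relabelling the second identity (with $w\notin X$ in the role of the outside point and an arbitrary $y\in X$ in the role of the inside representative) yields $\out{x,w}{y}=0$ for every $y\in X$. When $x=x'$ the first identity still holds, since $\out{x,x}{y}=1$ for $y\neq x$ and every $y\notin X$ differs from $x$. I would also note the elementary bounds $U_{\mathcal{X}}(x,y)\ge p_y$ for $y\neq x$ and $U_{\mathcal{X}}(x,y)\le 1$ for all $y$: in \Cref{def:uxy} the $z=y$ summand alone equals $(\out{x,y}{y}+\out{y,y}{x})p_y=p_y$ by \Cref{def:tripfunction} while the rest are nonnegative, and $U_{\mathcal{X}}(x,y)=\sum_z(\out{x,z}{y}+\out{y,z}{x})p_z\le\sum_z p_z=1$ because each coefficient is $1-\out{x,y}{z}\le 1$. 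In particular $U_{\mathcal{X}}(x,y)>0$ whenever $p_y>0$, and a summand with $p_y=0$ contributes $0$, so every quotient below is well defined.

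With these in hand the computation is immediate. Because $\out{x,w}{w}=0$ by \Cref{def:tripfunction} and $\out{x,w}{y}=0$ for all $y\in X$, the sum for $C_{\mathcal{X}}(x,w)$ collapses to $\sum_{y\in\mathcal{X}\setminus X,\ y\neq w}\frac{\out{x,w}{y}}{U_{\mathcal{X}}(x,y)}p_y$, which is at most $\sum_{y\in\mathcal{X}\setminus X,\ y\neq w}\frac{p_y}{U_{\mathcal{X}}(x,y)}$ since each $\out{x,w}{y}\le 1$. For $C_{\mathcal{X}}(x,x')$ I would discard the nonnegative terms with $y\in X$ and use $\out{x,x'}{y}=1$ on the remaining ones, obtaining $C_{\mathcal{X}}(x,x')\ge\sum_{y\in\mathcal{X}\setminus X}\frac{p_y}{U_{\mathcal{X}}(x,y)}=\sum_{y\in\mathcal{X}\setminus X,\ y\neq w}\frac{p_y}{U_{\mathcal{X}}(x,y)}+\frac{p_w}{U_{\mathcal{X}}(x,w)}$. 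Subtracting the two bounds leaves $C_{\mathcal{X}}(x,x')-C_{\mathcal{X}}(x,w)\ge\frac{p_w}{U_{\mathcal{X}}(x,w)}\ge p_w>0$, which is the claim.

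There is no genuinely hard step; the only things to watch are the bookkeeping of which summands vanish — in particular deducing $\out{x,w}{y}=0$ for $y\in X$ cleanly from point-likeness — and the harmless degeneracy of zero-mass points, handled by the bound $U_{\mathcal{X}}(x,y)\ge p_y$ together with the convention that a summand with $p_y=0$ is $0$. The case $x=x'$ is exactly the ``quick observation'' stated just before the proposition and is subsumed by the same argument.
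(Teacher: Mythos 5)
Your proof is correct and follows essentially the same route as the paper's: both rest on the two point-likeness consequences $\mathbb{1}_{\mathcal{X}}(\{x,x'\},y)=1$ for $y\notin X$ and $\mathbb{1}_{\mathcal{X}}(\{x,w\},y)=0$ for $y\in X$, and both extract strictness from the $y=w$ term. The paper phrases this as a single termwise comparison of numerators, $\mathbb{1}_{\mathcal{X}}(\{x,x'\},y)\ge\mathbb{1}_{\mathcal{X}}(\{x,w\},y)$ with strict inequality at $y=w$, whereas you bound the two sums separately; your added observation $U_{\mathcal{X}}(x,y)\le 1$ gives the slightly sharper conclusion $C_{\mathcal{X}}(x,x')-C_{\mathcal{X}}(x,w)\ge p_w$, but this is a presentational difference rather than a different argument.
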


\begin{proof} Suppose that $x, x' \in X$ and $w \in \mathcal{X} \setminus X$ (with $p_w \neq 0$).  In the case that $y \in \mathcal{X} \setminus X$, by \Cref{def:pointlike1}, $\out{x, x'}{y} = 1$ and so $\out{x, x'}{y} \ge \out{x, w}{y}$.  On the other hand, in the case that $y \in X$, again by \Cref{def:pointlike1}, we have $\out{x, w}{y} = 0$, and so $\out{x, x'}{y} \ge \out{x, w}{y}$.  Therefore, for any $y \in \mathcal{X}$, $\out{x, x'}{y} - \out{x, w}{y} \ge 0$ and the inequality is strict when $y = w$. 
Hence, 
$$C_{\mathcal{X}}(x, x') - C_{\mathcal{X}}(x, w) = \sum_{y \in X} \frac{\out{x, x'}{y} - \out{x, w}{y}}{U_{\mathcal{X}}(x, y)}p_y  > 0.$$

\end{proof}

There is not a universal upper bound for values of cohesion; if we restrict to the case of the uniform distribution on $\mathcal{X}$, we have the following.  

\begin{example} For a fixed $n \ge 3$ and the uniform probability mass  function $p = \frac{1}{n}$, the maximum values of $C_{\mathcal{X}}(x, w)$ for $x, w \in \mathcal{X}$ are achieved by the configuration $\mathcal{X} = \{x_1, x_2, \dots, x_n\} \subseteq \mathbb{R}$ where $x_i = 1/(2 + \varepsilon)^i$.  In such a case, 
$$C_{\mathcal{X}}(x_n, x_{n-1}) = \sum_{k = 3}^n \frac{1}{k} \text{ and }   C_{\mathcal{X}}(x_n, x_n) = 1 + \sum_{k = 3}^n \frac{1}{k}.$$
\end{example}

Since contracting distances within point-like sets does not change the evaluation of outlier-type dissimilarity comparisons (and thus the values of the induced triplet comparison function),  it follows immediately that cohesion is invariant under such transformations.   

\begin{corollary}[to \Cref{prop:shrinking}] Given $(\mathcal{X}, d, p)$ together with a point-like partition $\mathbf{X}$ of $\mathcal{X}$.  If $(\mathcal{X}, d', p)$ is obtained from a $\mathbf{X}$-transformation (i.e., one which shrinks within-set distances and uniformly expands between-set distances), the values of cohesion over $(\mathcal{X}, d', p)$ are equal to those over $(\mathcal{X}, d, p)$.  
\end{corollary}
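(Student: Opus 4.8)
The plan is to reduce the corollary to \Cref{prop:shrinking} by observing that cohesion is a function of the pair $(\mathbb{1}_{\mathcal{X}}, p)$ alone. Inspecting \Cref{def:uxy} and \Cref{def:cohesion}, both $U_{\mathcal{X}}$ and $C_{\mathcal{X}}$ are written entirely in terms of the values $\out{x, z}{y}$ of the triplet comparison function and the masses $p_z$; the dissimilarity $d$ enters only through the triplet comparison function it induces via \Cref{def:Ind_Xd}. So it suffices to show that $d$ and its $\mathbf{X}$-transformation $d'$ induce the \emph{same} triplet comparison function on $\mathcal{X}$, after which the two formulas produce identical values of $U_{\mathcal{X}}$ and hence of $C_{\mathcal{X}}$.

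First I would fix a triple of distinct points $x, y, z \in \mathcal{X}$. Because $d$ satisfies the no-ties hypothesis from \Cref{sec:background} (distances sharing a common endpoint are distinct), the three numbers $d(x,y), d(x,z), d(y,z)$ are pairwise distinct, so exactly one of them is strictly smallest; equivalently, exactly one of $\out{x,y}{z}, \out{x,z}{y}, \out{y,z}{x}$ equals $1$ under \Cref{def:Ind_Xd} and the other two equal $0$. Applying \eqref{eq:consistent} of \Cref{prop:shrinking} to each of the three labelings of the triple, the same one of the three outlier-type inequalities that holds for $d$ also holds for $d'$ (and the other two fail), so the value $\out{x,y}{z}$ computed from $d$ equals that computed from $d'$, and likewise for the other two. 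For non-distinct triples the values are pinned down by \Cref{def:tripfunction} (the diagonal cases $\out{x,x}{x} = 1/3$ and, for $y \neq x$, $\out{x,x}{y} = 1$, $\out{x,y}{x} = 0$) independently of the underlying dissimilarity, so they agree trivially. Hence the induced triplet comparison functions coincide on all of $\mathcal{X}^3$.

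Finally I would substitute this equality of triplet comparison functions, together with the common mass function $p$, into \Cref{def:uxy} to conclude that $U_{\mathcal{X}}$ is unchanged, and then substitute both into \Cref{def:cohesion} to conclude that $C_{\mathcal{X}}(x,w)$ is the same over $(\mathcal{X}, d', p)$ and $(\mathcal{X}, d, p)$ for every $x, w \in \mathcal{X}$.

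I do not expect a genuine obstacle here: all of the real content lives in \Cref{prop:shrinking}, and the corollary is just the bookkeeping remark that cohesion factors through $(\mathbb{1}_{\mathcal{X}}, p)$. The single point worth a clarifying sentence is that an $\mathbf{X}$-transformation could, in principle, create a tie in $d'$ (e.g. $\alpha_i d(x,y) = \beta d(x,z)$ across blocks); but \eqref{eq:consistent} forces a unique outlier for every triple of distinct points under $d'$ as well, so \Cref{def:Ind_Xd} still applies to $d'$ without ambiguity, and the argument goes through unchanged.
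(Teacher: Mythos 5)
Your proposal is correct and matches the paper's own (one-sentence) justification: the corollary is stated as an immediate consequence of \Cref{prop:shrinking}, precisely because cohesion depends on $(\mathcal{X}, d, p)$ only through the induced triplet comparison function, which \eqref{eq:consistent} shows is unchanged by an $\mathbf{X}$-transformation. Your extra remark about possible ties created by $d'$ is a sensible clarification, but the argument is otherwise the same as the paper's.
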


To motivate our consideration of quotient spaces, in \Cref{fig:pointlike}, we display the cohesion matrix, $[C_{\mathcal{X}}]_{i, j} = [C_{\mathcal{X}}(x_i, x_j)]$, for a small Euclidean set, $\mathcal{X} \subseteq \mathbb{R}^2$, with a point-like partition consisting of three sets with 20, 30 and 50 points, respectively.  Notice that point-like sets take on qualities of a single weighted point, see \cref{prop:ptlike_calc}, below.   In the following example, we compute the values of cohesion for an associated quotient set, $\bar{\mathcal{X}}$.  

\begin{figure}[htbp]
  \centering
  \label{fig:pointlike}\includegraphics[scale = 1]{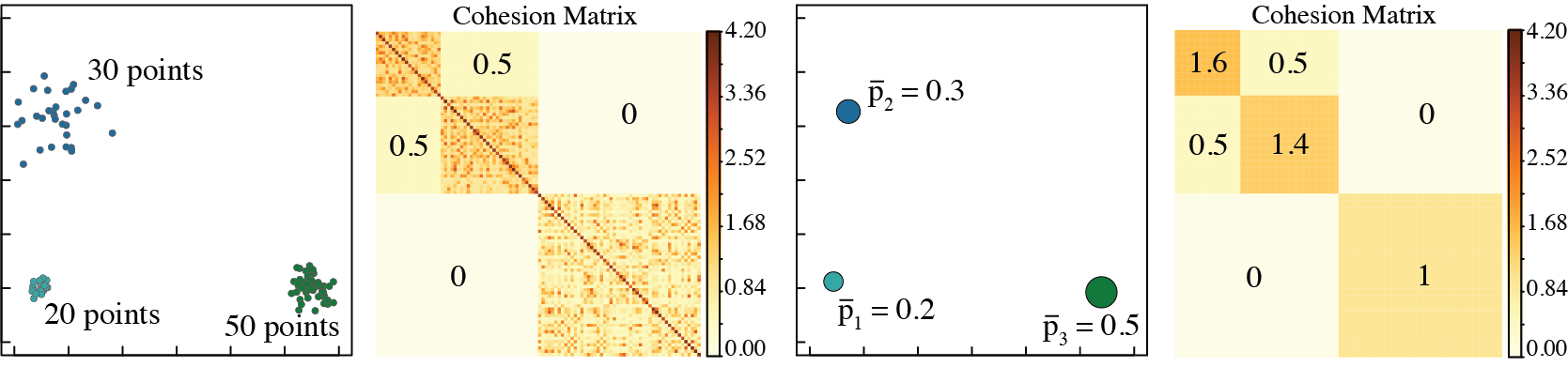}
  \caption{On the left, a set $\mathcal{X} \subseteq \mathbb{R}^2$ consisting of three point-like sets with varying size and density.  The associated cohesion matrix highlights that between group values of cohesion are constant.  On the right, each point-like set has been replaced with a weighted point.  We return to the fractal-like behavior witnessed in the display of the cohesion matrices in \Cref{prop:ptlike_calc}.  }
\end{figure}

\begin{example} Suppose $\bar{\mathcal{X}} = \{\bar{x}_1, \bar{x}_2, \bar{x}_3\}$ where $d(\bar{x}_1, \bar{x}_2) < d(\bar{x}_1, \bar{x}_3) < d(\bar{x}_2, \bar{x}_3)$ and $\bar{p}_{1} = .2, \bar{p}_2 = .3, \bar{p}_3 = .5$. Using \Cref{def:cohesion}, we have
$$C_{\bar{\mathcal{X}}}(\bar{x}_1, \bar{x}_1) = \frac{1/3}{(2/3)(.2)} (.2) + \frac{1}{.5}(.3) + \frac{1}{1}(.5) = 1.6 \text{ and } $$ $$ C_{\bar{\mathcal{X}}}(\bar{x}_1, \bar{x}_2) = \frac{0}{(2/3)(.2)}(.2) + \frac{0}{.5}(.3) + \frac{1}{1}(.5) = 0.5.$$
In the same way, one obtains $C_{\bar{\mathcal{X}}}(\bar{x}_2, \bar{x}_1) = 0.5$, $C(\bar{x}_2, \bar{x}_2) = 1.4$, $C(\bar{x}_3, \bar{x}_3) = 1$, and the remaining values are zero, see \cref{fig:pointlike}.  
\end{example}

\section{Point-Like Partitions and Quotient Spaces}\label{sec:quotient}

In this section, building on \cref{fig:pointlike}, we introduce an associated quotient space which allows us to show how cohesion permits point-like sets to take on qualities of a single weighted point.  As point-like sets can have quite different average within-cluster distances, the properties in this section allow one to see how, when sets are sufficiently separated as to be point-like, cohesion does not witness local density.  Cohesion can then reveal relationships among points that can otherwise be obscured by absolute distance.

\begin{definition}\label{def:quotient} Given $(\mathcal{X}, \mathbb{1}_{\mathcal{X}},p)$ and point-like partition $\textbf{X} = \{X_1, X_2, ..., X_N\}$, let $\bar{\mathcal{X}} = \{\bar{x}_1, \bar{x}_2, ..., \bar{x}_N\}$ where $\bar{x}_i \in X_i$ for $1 \leq i \leq N$ be a set of representatives.   The associated \emph{quotient triplet comparison space} is $(\bar{\mathcal{X}}, \mathbb{1}_{\bar{\mathcal{X}}}, \bar{p})$, where the probability mass function is $\bar{p}_{i} = \sum_{x \in X_i} p_x$ and $\mathbb{1}_{\bar{\mathcal{X}}}: \bar{\mathcal{X}}^3 \rightarrow [0, 1]$ is given by
$\mathbb{1}_{\bar{\mathcal{X}}}(\{\bar{x}_i, \bar{x}_j\}, \bar{x}_k) = \mathbb{1}_{\mathcal{X}}(\{\bar{x}_i, \bar{x}_j\}, \bar{x}_k).$
\end{definition}

Given $X \subseteq \mathcal{X}$ we define an induced triplet comparison subspace; note that $X$ need not be a point-like set of $\mathcal{X}$, though it often will be.

\begin{definition}\label{def:subspace} Given a similarity comparison space, $(\mathcal{X}, \mathbb{1}_{\mathcal{X}}, p)$ and $X \subseteq \mathcal{X}$,  the induced \emph{triplet comparison subspace} is $(X, \mathbb{1}_X, p|^X)$, where the probability mass function is $p|^X_x = \frac{p_x}{m(X)}$ and $\mathbb{1}_{X}: \mathbf{X}^3 \rightarrow [0, 1]$ is given by $\mathbb{1}_X(\{x, y\}, z) = \mathbb{1}_{\mathcal{X}}(\{x, y\}, z)$.  
\end{definition}

\noindent That the induced subspace and quotient spaces are also a triplet comparison space follows immediately from \Cref{def:pointlike1,def:tripfunction}. In \Cref{fig:fourgroups}, we illustrate some of the properties established below.  

\begin{figure}[htbp]
  \centering
  \label{fig:fourgroups}\includegraphics[scale = 1]{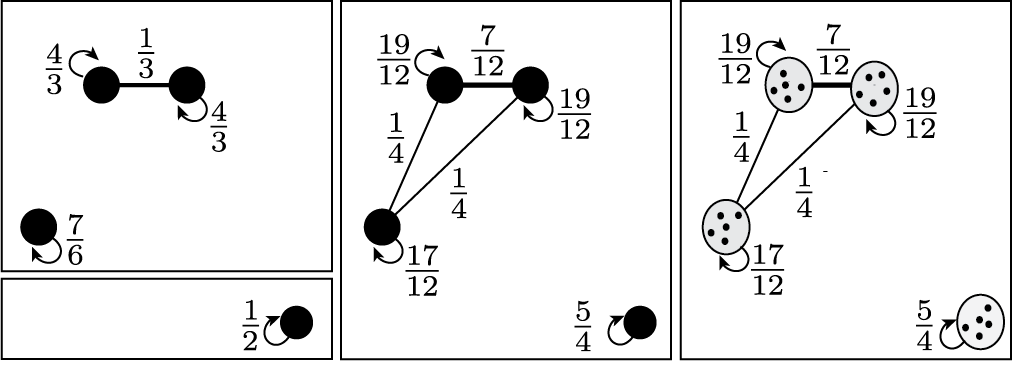}
  \caption{The non-zero values of cohesion are described with edge labels; the indicated values of cohesion are in this case symmetric.  In the center, an outlier is added; cohesion with the outlier is zero, and all other values in the main group have increased by $1/4$ (the mass of the outlier) and the self-cohesion of the outlier is increased by $3/4$.  We refer to this property as ``the influence of an outlier is proportional to its mass", see \Cref{def:linearp}.  On the right, an illustration of point-like behavior considered in \Cref{prop:ptlike_calc}. }
\end{figure}

We will require the following lemma, whose proof is in the Appendix, for the next proposition.

\begin{lemma}\label{lem:uxy} Given $(\mathcal{X}, \mathbb{1}_{\mathcal{X}}, p)$ and point-like partition $\mathbf{X} = \{X_1, X_2, ..., X_N\}$,  let $\bar{\mathcal{X}} = \{\bar{x}_1, \bar{x}_2, ..., \bar{x}_N\}$ be  an associated set of representatives defining the quotient space $(\bar{\mathcal{X}}, \mathbb{1}_{\bar{\mathcal{X}}}, \bar{p})$.  Then for $x \in X_i$ and $y \in X_j$, we have
\[U_{\mathcal{X}}(x, y) = \begin{cases} m(X_i)U_{X_i}(x, y) & \text{ if } i = j \\
 U_{\bar{\mathcal{X}}}(\bar{x}_i, \bar{x}_j) & \text{ if } i \neq j. \end{cases}\]
\end{lemma}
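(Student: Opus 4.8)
The plan is to unwind \Cref{def:uxy} and split the defining sum over $z \in \mathcal{X}$ according to which block $X_k$ of the partition $z$ lies in, exploiting that $\mathbf{X}$ is point-like so that $\mathbb{1}_{\mathcal{X}}$ only depends on the block of each argument when the three arguments do not all lie in one block.

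\textbf{Case $i \neq j$.} Here $x \in X_i$, $y \in X_j$ are separated. For a point $z \in X_k$, the triplet $\{x,z\},y$ never has all three arguments in a single block (since $x$ and $y$ are in different blocks), so point-likeness lets me replace $z$ by the representative $\bar{x}_k$: $\out{x, z}{y} = \out{\bar{x}_i, \bar{x}_k}{\bar{x}_j}$ if $k \neq i$, and $=0$ if $k = i$ (because then $x, z \in X_i$, $y \notin X_i$, forcing $\out{x,z}{y}=0$ by the remark after \Cref{def:pointlike1}). The summand $\out{x,z}{y}$ is therefore constant over $z \in X_k$, so $\sum_{z \in X_k}\out{x,z}{y}p_z = \out{\bar{x}_i, \bar{x}_k}{\bar{x}_j}\, m(X_k) = \out{\bar{x}_i, \bar{x}_k}{\bar{x}_j}\,\bar{p}_k$; likewise for $\out{y,z}{x}$. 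Summing over all blocks $k$ reassembles exactly $\sum_k (\out{\bar{x}_i,\bar{x}_k}{\bar{x}_j} + \out{\bar{x}_j,\bar{x}_k}{\bar{x}_i})\bar{p}_k = U_{\bar{\mathcal{X}}}(\bar{x}_i, \bar{x}_j)$, using \Cref{def:quotient} to identify $\mathbb{1}_{\bar{\mathcal{X}}}$ with the restriction of $\mathbb{1}_{\mathcal{X}}$.

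\textbf{Case $i = j$.} Now $x, y \in X_i$. Split $z$ over blocks again. For $z \in X_k$ with $k \neq i$: the triplet $\{x,z\},y$ has $x,y \in X_i$ and $z \notin X_i$, so $\out{x,z}{y}=0$, and symmetrically $\out{y,z}{x}=0$; these blocks contribute nothing. For $z \in X_i$: all three points lie in $X_i$, so $\out{x,z}{y}=\mathbb{1}_{X_i}(\{x,z\},y)$ by \Cref{def:subspace}, and the surviving part of the sum is $\sum_{z \in X_i}(\mathbb{1}_{X_i}(\{x,z\},y) + \mathbb{1}_{X_i}(\{y,z\},x))p_z$. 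Since $p|^{X_i}_z = p_z/m(X_i)$, this equals $m(X_i)\sum_{z \in X_i}(\mathbb{1}_{X_i}(\{x,z\},y) + \mathbb{1}_{X_i}(\{y,z\},x))p|^{X_i}_z = m(X_i)\,U_{X_i}(x,y)$, as claimed.

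The main obstacle — and the only subtle point — is making the block-replacement in the $i \neq j$ case rigorous: I must be careful that whenever I invoke \Cref{def:pointlike1} to swap $z$ for $\bar{x}_k$ in $\out{x,z}{y}$, the pair of ``other" points $\{x,y\}$ is genuinely \emph{not both in} $X_k$ (which holds because $x \in X_i \neq X_k$ or, if $k=i$, because then $y \notin X_k$), and separately handle the degenerate sub-case $z=x$ or $z=y$, where the triplet-function axioms in \Cref{def:tripfunction}(1) already force the relevant indicator to $0$, consistently with the $k=i$ bookkeeping above. Once the casework on $(i,j,k)$ is organized this way, everything else is the routine resummation indicated.
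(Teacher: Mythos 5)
Your overall strategy---splitting the sum in \Cref{def:uxy} over the blocks of the partition and replacing each point by its block representative---is exactly the paper's, and your $i=j$ case is correct. But your $i \neq j$ case contains a false intermediate claim. You assert that for $z \in X_k$ with $k = i$ one has $\out{x,z}{y} = 0$ ``because $x, z \in X_i$, $y \notin X_i$.'' This has the orientation of the remark after \Cref{def:pointlike1} backwards: that remark gives $\out{x,x'}{y} = 1$ when $x, x' \in X$ and $y \in \mathcal{X} \setminus X$ (the within-set pair is the most-alike pair and $y$ is the outlier); the quantity that vanishes is $\out{x,y}{x'}$, where the pair straddles the set and the designated outlier lies inside it. The same confusion appears in your closing remark on the sub-case $z = x$: there $\out{x,x}{y} = 1$ by the restatement of \Cref{def:tripfunction}(1), while it is $\out{y,x}{x}$ that vanishes.

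The error is not cosmetic. With your stated values the block $k=i$ contributes $0$ to $\sum_{z}\out{x,z}{y}p_z$, whereas the $k=i$ term of $U_{\bar{\mathcal{X}}}(\bar{x}_i,\bar{x}_j)$ is $\outbar{\bar{x}_i,\bar{x}_i}{\bar{x}_j}\,\bar{p}_i = \bar{p}_i$; doing ``likewise'' for $\out{y,z}{x}$ drops $\bar{p}_j$ as well, so the reassembled total would be $U_{\bar{\mathcal{X}}}(\bar{x}_i,\bar{x}_j) - \bar{p}_i - \bar{p}_j$, not $U_{\bar{\mathcal{X}}}(\bar{x}_i,\bar{x}_j)$. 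The fix is to drop the special case entirely: the identification $\out{x,z}{y} = \outbar{\bar{x}_i,\bar{x}_k}{\bar{x}_j}$ holds uniformly for $z \in X_k$, including $k = i$ (where both sides equal $1$) and including $z = x$ or $z = y$. That is precisely how the paper's proof proceeds; with this correction your argument coincides with it.
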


The following proposition describes the way in which point-like sets take on qualities of a single weighted point.  In particular, the values of cohesion are constant between distinct point-like sets.  This property can be observed in \Cref{fig:pointlike}.  As point-like sets can have very different average within-cluster distances, it is this property that explains how cohesion accounts for varying local density.

\begin{proposition}\label{prop:ptlike_calc} Given $\mathcal{X}$ with point-like partition $\mathbf{X} = \{X_1, X_2, ..., X_N\}$,  let \\ $\bar{\mathcal{X}} = \{\bar{x}_1, \bar{x}_2, ..., \bar{x}_N\}$ be  an associated set of representatives.   Then for any $x \in X_i$ and $w \in X_j$, 
\begin{displaymath}C_{\mathcal{X}}(x, w) = \begin{cases} C_{\bar{\mathcal{X}}}(\bar{x}_i, \bar{x}_j) & \text{ if }  i \neq j, \\ C_{\bar{\mathcal{X}}}(\bar{x}_i, \bar{x}_i)  + (C_{X_i}(x, w) - \frac{1}{2}) & \text{ if } i = j.\end{cases}\end{displaymath}
And consequentially, for any $1 \leq i, j \leq N$, 
\begin{displaymath}\sum_{x \in X_i} \sum_{w \in X_j} C_{\mathcal{X}} (x, w)p_x p_w  = C_{\bar{\mathcal{X}}}(\bar{x}_i, \bar{x}_j) p_{\bar{x}_i} p_{\bar{x}_j}.\end{displaymath}
\end{proposition}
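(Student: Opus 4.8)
The plan is to start from \Cref{def:cohesion}, writing $C_{\mathcal{X}}(x,w)=\sum_{y\in\mathcal{X}}\frac{\mathbb{1}_{\mathcal{X}}(\{x,w\},y)}{U_{\mathcal{X}}(x,y)}p_y$, and to break the sum over $y$ according to which block $X_k$ of the partition contains $y$. The denominators are handled by \Cref{lem:uxy}, which rewrites $U_{\mathcal{X}}(x,y)$ as $U_{\bar{\mathcal{X}}}(\bar{x}_i,\bar{x}_k)$ when $y$ lies in a block $X_k$ different from $X_i$, and as $m(X_i)U_{X_i}(x,y)$ when $y\in X_i$. The numerators are handled by a ``reduction to representatives'' claim: for $x\in X_i$, $w\in X_j$, $y\in X_k$ one has $\mathbb{1}_{\mathcal{X}}(\{x,w\},y)=\mathbb{1}_{\bar{\mathcal{X}}}(\{\bar{x}_i,\bar{x}_j\},\bar{x}_k)$ when $i,j,k$ are pairwise distinct; $\mathbb{1}_{\mathcal{X}}(\{x,w\},y)=1$ when $i=j$ and $k\neq i$; $\mathbb{1}_{\mathcal{X}}(\{x,w\},y)=0$ when $i\neq j$ and $k\in\{i,j\}$; and $\mathbb{1}_{\mathcal{X}}(\{x,w\},y)=\mathbb{1}_{X_i}(\{x,w\},y)$ when $i=j=k$ (this last case being \Cref{def:subspace}).

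I would prove the pairwise-distinct case by swapping one argument of $\mathbb{1}_{\mathcal{X}}$ at a time for the representative of its block: an element sitting in the unordered pair can be swapped using \Cref{def:pointlike1} directly (appealing to \Cref{def:tripfunction}(2) to bring it into ``first'' position), and the outlier can be swapped after rewriting $\mathbb{1}_{\mathcal{X}}(\{a,b\},c)=1-\mathbb{1}_{\mathcal{X}}(\{a,c\},b)-\mathbb{1}_{\mathcal{X}}(\{b,c\},a)$ via \Cref{def:tripfunction}(3); in every swap the side condition of \Cref{def:pointlike1} (the other two arguments are not both in the block) holds precisely because the three blocks are distinct, and the end result equals $\mathbb{1}_{\bar{\mathcal{X}}}(\{\bar{x}_i,\bar{x}_j\},\bar{x}_k)$ by \Cref{def:quotient}. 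The case $i=j\neq k$ is the immediate consequence of \Cref{def:pointlike1} already recorded in the text. For $i\neq j$ with $k=i$: here $x,y\in X_i$ and $w\notin X_i$, and expanding via \Cref{def:tripfunction}(3) gives $1-\mathbb{1}_{\mathcal{X}}(\{x,y\},w)-\mathbb{1}_{\mathcal{X}}(\{w,y\},x)$, whose first term is $1$ (the point-like consequence) and whose second reduces by one $X_i$-swap to $\mathbb{1}_{\mathcal{X}}(\{x,w\},x)=0$ via \Cref{def:tripfunction}(1), so the value is $0$; the case $k=j$ is symmetric.

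Granting the claim, the case $i\neq j$ of the proposition is immediate: the blocks $X_i$ and $X_j$ contribute $0$, and each block $X_k$ with $k\notin\{i,j\}$ contributes $\frac{\mathbb{1}_{\bar{\mathcal{X}}}(\{\bar{x}_i,\bar{x}_j\},\bar{x}_k)}{U_{\bar{\mathcal{X}}}(\bar{x}_i,\bar{x}_k)}\sum_{y\in X_k}p_y=\frac{\mathbb{1}_{\bar{\mathcal{X}}}(\{\bar{x}_i,\bar{x}_j\},\bar{x}_k)}{U_{\bar{\mathcal{X}}}(\bar{x}_i,\bar{x}_k)}\bar{p}_k$, so the total is $C_{\bar{\mathcal{X}}}(\bar{x}_i,\bar{x}_j)$ (whose $k=i,j$ terms also vanish by \Cref{def:tripfunction}(1)). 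For $i=j$, I split the sum at $X_i$. The terms with $y\notin X_i$ give $\sum_{k\neq i}\frac{\bar{p}_k}{U_{\bar{\mathcal{X}}}(\bar{x}_i,\bar{x}_k)}$; comparing with $C_{\bar{\mathcal{X}}}(\bar{x}_i,\bar{x}_i)$ and using $\mathbb{1}_{\bar{\mathcal{X}}}(\{\bar{x}_i,\bar{x}_i\},\bar{x}_i)=\tfrac13$ together with $U_{\bar{\mathcal{X}}}(\bar{x}_i,\bar{x}_i)=2\bar{p}_i/3$, the single missing term $y=\bar{x}_i$ contributes exactly $\tfrac12$, so this part equals $C_{\bar{\mathcal{X}}}(\bar{x}_i,\bar{x}_i)-\tfrac12$. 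The terms with $y\in X_i$ give $\sum_{y\in X_i}\frac{\mathbb{1}_{X_i}(\{x,w\},y)}{m(X_i)U_{X_i}(x,y)}p_y$; substituting $p_y=m(X_i)\,p|^{X_i}_y$ cancels $m(X_i)$ and leaves $C_{X_i}(x,w)$. Adding the two pieces gives the stated formula.

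For the final consequence, I sum the formula against $p_xp_w$ over $x\in X_i$, $w\in X_j$. When $i\neq j$ the summand is the constant $C_{\bar{\mathcal{X}}}(\bar{x}_i,\bar{x}_j)$, so the sum is $C_{\bar{\mathcal{X}}}(\bar{x}_i,\bar{x}_j)\,m(X_i)\,m(X_j)$. When $i=j$, the constant part contributes $\big(C_{\bar{\mathcal{X}}}(\bar{x}_i,\bar{x}_i)-\tfrac12\big)m(X_i)^2$, while $\sum_{x,w\in X_i}C_{X_i}(x,w)p_xp_w=m(X_i)^2\sum_{x,w\in X_i}C_{X_i}(x,w)\,p|^{X_i}_x\,p|^{X_i}_w=\tfrac12\,m(X_i)^2$ by \Cref{prop:half} applied to the subspace $(X_i,\mathbb{1}_{X_i},p|^{X_i})$; the two combine to $C_{\bar{\mathcal{X}}}(\bar{x}_i,\bar{x}_i)\,m(X_i)^2$. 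Since $p_{\bar{x}_i}=\bar{p}_i=m(X_i)$ in the quotient space, this is the claimed identity. The main obstacle is the reduction-to-representatives claim: each individual swap is routine, but the case analysis must be organized carefully so that the ``not both in the block'' hypothesis of \Cref{def:pointlike1} genuinely applies, and the degenerate configurations (where $y$ lands in block $i$ or $j$, or all three blocks coincide) must be recognized and dispatched separately using the axioms of \Cref{def:tripfunction}.
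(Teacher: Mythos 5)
Your proposal is correct and follows essentially the same route as the paper's proof: decompose the sum over $y$ by blocks of the partition, use \Cref{lem:uxy} for the denominators, reduce the numerators to the quotient via the point-like property, extract the $\tfrac12$ from the self-term $U_{\bar{\mathcal{X}}}(\bar{x}_i,\bar{x}_i)=2\bar{p}_i/3$, and apply \Cref{prop:half} to the subspace for the final display. You are somewhat more explicit than the paper about the reduction-to-representatives step (the argument-by-argument swaps using \Cref{def:tripfunction}(2)--(3)), which the paper compresses into a single appeal to \Cref{def:pointlike1}, but the underlying argument is identical.
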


\begin{proof}
Suppose that $x \in X_i$ and $w \in X_j$. Using first that when $y \in X_k$ for some $k \neq i$, by \cref{def:pointlike1}, $\mathbb{1}_{\mathcal{X}}(\{x, w\}, y) = \mathbb{1}_{\bar{\mathcal{X}}}(\{\bar{x}_i, \bar{x}_j\}, \bar{x}_k)$ and $U_{\mathcal{X}}(x, y) = U_{\bar{\mathcal{X}}}(\bar{x}_i, \bar{x}_k)$ (see \cref{lem:uxy}), we have
\begin{equation} \label{eq:fractal1} \sum_{y \in \mathcal{X} \setminus X_i} \frac{\out{x, w}{y}}{U_{\mathcal{X}}(x, y)} p_y =  \sum_{k \neq i} \sum_{y \in \mathcal{X}_k}
\frac{\mathbb{1}_{\bar{\mathcal{X}}}(\{\bar{x}_i, \bar{x}_j\},\bar{x}_k)}{U_{\bar{\mathcal{X}}}(\bar{x}_i, \bar{x}_k)}p_y = \sum_{k \neq i} 
\frac{\mathbb{1}_{\bar{\mathcal{X}}}(\{\bar{x}_i, \bar{x}_j\}, \bar{x}_k)}{U_{\bar{\mathcal{X}}}(\bar{x}_i, \bar{x}_k)}p_{\bar{x}_k}.\end{equation}

We first consider the case in which $i = j$. Using \Cref{def:tripfunction}, and that $U_{\bar{\mathcal{X}}}(\bar{x}_i, \bar{x}_i) = 2 \mathbb{1}_{\bar{\mathcal{X}}}(\{\bar{x}_i, \bar{x}_i\}, \bar{x}_i)p_{\bar{x}_i}$, we have
\begin{equation} \label{eq:fractal2}\sum_{k \neq i} 
\frac{\mathbb{1}_{\bar{\mathcal{X}}}(\{\bar{x}_i, \bar{x}_i\}, \bar{x}_k)}{U_{\bar{\mathcal{X}}}(\bar{x}_i, \bar{x}_k)}p_{\bar{x}_k} = C_{\bar{\mathcal{X}}}(\bar{x}_i, \bar{x}_i) - \frac{1}{2}.\end{equation}
Further, when $y \in X_i$, we have $\out{x,w}{y} = \mathbb{1}_{X_i}(\{x, w\}, y)$ and $U_{\mathcal{X}}(x, y) = U_{X_i}(x, y)m(X_i)$ (see \cref{lem:uxy}).  Together with the fact that $p|_y^{X_i} = \frac{p_y}{m(X_i)}$, we have
\begin{equation} \label{eq:fractal3} \sum_{y \in X_i} \frac{\out{x, w}{y}}{U_{\mathcal{X}}(x, y)} p_y = \sum_{y \in X_i} \frac{\mathbb{1}_{X_i}(\{x, w\}, y)}{ U_{X_i}(x, y)m(X_i)} p_y = \sum_{y \in X_i} \frac{\mathbb{1}_{X_i}(\{x, w\}, y)}{U_{X_i}(x, y)} p|^{X_i}_y = C_{X_i}(x, w). \end{equation}

Hence, using \eqref{eq:fractal3} for the first sum and \eqref{eq:fractal1} and \eqref{eq:fractal2}, for the second, we have
\begin{displaymath}C_{\mathcal{X}}(x, w) = \sum_{y \in X_i} \frac{
\out{x, w}{y}}{U_{\mathcal{X}}(x, y)} p_y + \sum_{y \in \mathcal{X} \setminus X_i} \frac{\out{x, w}{y}}{U_{\mathcal{X}}(x, y)} p_y = C_{X_i}(x, w) + C_{\bar{\mathcal{X}}}(\bar{x}_i, \bar{x}_j) - \frac{1}{2}.\end{displaymath}

We now, consider the case in which $i \neq j$. For $y \in X_i$, we have $\out{x, w}{y} = 0$ and $\outbar{\bar{x}_i, \bar{x}_j}{\bar{x}_i} = 0$, and so, with \eqref{eq:fractal1}, 
\begin{displaymath} 
C_{\mathcal{X}}(x, w) = \sum_{k \neq i} \sum_{y \in X_k} \frac{\out{x, w}{y}}{U_{\mathcal{X}}(x, y)}p_y = \sum_{k \neq i} 
\frac{\outbar{\bar{x}_i, \bar{x}_j}{\bar{x}_k}}{U_{\bar{\mathcal{X}}}(\bar{x}_i, \bar{x}_k)}p_{\bar{x}_k} = C_{\bar{\mathcal{X}}}(\bar{x}_i, \bar{x}_j).
\end{displaymath}
\end{proof}

As an example of the application of \cref{prop:ptlike_calc}, we show how the property ``separation under increasing distance" given in \cite{pald22} follows as a corollary.  

\begin{corollary}\label{cor:ptlike_calc} Suppose $\mathcal{X} = A \cup B$ are such that for any $a, a' \in A$ and $b, b' \in B$, \\ $\max\{d(a, a'), d(b, b')\} < d(a, b)$.   Then for any $a \in A$ and $b \in B$, $C_{\mathcal{X}}(a, b) = C_{\mathcal{X}}(b, a) = 0$.
\end{corollary}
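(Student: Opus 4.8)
The plan is to deduce the corollary from \Cref{prop:ptlike_calc} by showing that the separation hypothesis makes $\mathbf{X} = \{A, B\}$ a point-like partition of $\mathcal{X}$ with exactly two blocks, then exploiting the fact that a two-point triplet comparison space has zero cross-cohesion. First I would verify point-likeness directly from \Cref{def:pointliked}: fix $x, x' \in A$ and $y, z \in \mathcal{X}$ not both in $A$, and split into the cases (i) $y, z \in B$ and (ii) exactly one of $y, z$ in $A$ and the other in $B$. In case (i), $d(y,z)$ is a within-$B$ distance while $d(x,y)$ is a between-block distance, so $\min\{d(x,z), d(y,z)\} \le d(y,z) < d(x,y)$ and the outlier comparison is $0$ for both $x$ and $x'$. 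In case (ii), if the $A$-point among $y, z$ is $y$ then $d(x,y)$ is within-$A$ and both $d(x,z), d(y,z)$ are between-block, forcing $d(x,y) < \min\{d(x,z), d(y,z)\}$ (value $1$) for both representatives; if instead the $A$-point is $z$, then $d(x,z)$ is within-$A$ and $d(x,y)$ is between-block, so $\min\{d(x,z),d(y,z)\} \le d(x,z) < d(x,y)$ (value $0$) for both. In each case the truth value does not depend on whether the $A$-representative is $x$ or $x'$, so $A$ is point-like; the identical argument with $A$ and $B$ interchanged shows $B$ is point-like.

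Next I would pass to the associated quotient space $\bar{\mathcal{X}} = \{\bar a, \bar b\}$ of \Cref{def:quotient}, with $\bar p_{\bar a} = m(A)$ and $\bar p_{\bar b} = m(B)$. Since $a \in A$ and $b \in B$ lie in different blocks, the $i \neq j$ case of \Cref{prop:ptlike_calc} gives $C_{\mathcal{X}}(a,b) = C_{\bar{\mathcal{X}}}(\bar a, \bar b)$ and $C_{\mathcal{X}}(b,a) = C_{\bar{\mathcal{X}}}(\bar b, \bar a)$, so it suffices to show both of these vanish. This is immediate from \Cref{def:cohesion}: the sum defining $C_{\bar{\mathcal{X}}}(\bar a, \bar b)$ runs only over $y \in \{\bar a, \bar b\}$, and for each such $y$ the numerator $\mathbb{1}_{\bar{\mathcal{X}}}(\{\bar a, \bar b\}, y)$ equals $0$ by part (1) of \Cref{def:tripfunction} (since $\bar a \neq \bar b$), while the denominators $U_{\bar{\mathcal{X}}}(\bar a, \bar a) = 2\bar p_{\bar a}/3$ and $U_{\bar{\mathcal{X}}}(\bar a, \bar b) = \bar p_{\bar a} + \bar p_{\bar b} = 1$ are positive. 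Hence $C_{\bar{\mathcal{X}}}(\bar a, \bar b) = 0$, and symmetrically $C_{\bar{\mathcal{X}}}(\bar b, \bar a) = 0$, which finishes the argument.

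I do not anticipate a genuine obstacle; the only place requiring care is the case analysis establishing that $A$ and $B$ are point-like, and in particular the degenerate sub-cases in which some of $x, x', y, z$ coincide or in which $A$ or $B$ is a singleton. These are handled by invoking the standing property $d(x,x) < d(x,y)$ in place of the separation hypothesis wherever a "within-block distance" happens to be of the form $d(x,x)$. As an aside, one could bypass \Cref{prop:ptlike_calc} altogether and argue directly from \Cref{def:cohesion} that $\mathbb{1}_{\mathcal{X}}(\{a,b\}, y) = 0$ for every $y \in \mathcal{X}$ — for $y \in A$ because $d(a,y) < d(a,b)$, for $y \in B$ because $d(b,y) < d(a,b)$, and for $y \in \{a,b\}$ by part (1) of \Cref{def:tripfunction} — so that every term of the sum for $C_{\mathcal{X}}(a,b)$ is zero; but routing through the quotient space keeps the proof in line with the framing of the corollary.
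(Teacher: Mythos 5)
Your proof is correct and takes essentially the same route as the paper's own: observe that $A$ and $B$ are point-like, pass to the two-element quotient space $\{\bar a,\bar b\}$ where the cross-cohesion vanishes by direct computation from \Cref{def:cohesion}, and then invoke \Cref{prop:ptlike_calc} to transfer this back to $\mathcal{X}$. The only difference is that you write out the case analysis verifying point-likeness (which the paper simply asserts with ``notice that''), and this added detail is accurate.
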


\begin{proof} First, notice that $A$ and $B$ are point-like sets.  Fix $\bar{a} \in A$ and $\bar{b} \in B$ and let $\bar{\mathcal{X}} = \{\bar{a}, \bar{b}\}$ be an associated quotient space in which $p_{\bar{a}} = m(A)$ and $p_{\bar{b}} = m(B)$ and $\out{\bar{a}, \bar{b}}{y}  = 0$ for both $y \in \bar{\mathcal{X}}$.   Using \cref{def:cohesion}, $$C_{\bar{\mathcal{X}}}(\bar{a}, \bar{b}) =  \frac{\out{\bar{a}, \bar{b}}{\bar{a}}}{U_{\bar{\mathcal{X}}}(\bar{a}, \bar{a})}p_{\bar{a}} + \frac{\out{\bar{a}, \bar{b}}{\bar{b}}}{U_{\bar{\mathcal{X}}}(\bar{a}, \bar{b})}p_{\bar{b}}  = 0.$$ Now, by \cref{prop:ptlike_calc}, for any $a \in A$ and $b \in B$,  $C_{\mathcal{X}}(a, b) = C_{\bar{\mathcal{X}}}(\bar{a}, \bar{b}) = 0$.  
\end{proof}

Note also that the property ``limiting irrelevance of density" in \cite{pald22} is also a special case of \Cref{prop:ptlike_calc} in which $\mathcal{X}$ is the union of two structurally identical point-like sets.

\section{The Role of Outliers and a Uniqueness Result}\label{sec:uniqueness}

We now turn our attention to the influence of outliers on the cohesion between points.   We conclude with a main result which shows that cohesion is the unique function with constant average value and for which the influence of an outlier is proportional to its mass.

\begin{example} Suppose $(\mathcal{X}, d, p)$ is a dissimilarity space such that $\mathcal{X} = X \cup \{z\}$ where $z$ is an outlier in the sense that $\max_{x, x' \in X} \{d(x, x')\} < \min_{x \in X} \{d(x, z)\}$.  Take $x, w \in X$.  Since $X$ is point-like, for any $y \in X$, $\mathbb{1}_{\mathcal{X}}(\{x, w\}, y) = \mathbb{1}_{X}(\{x, w\}, y)$ and $U_{\mathcal{X}}(x, y) = U_{X}(x, y)m(X)$ (see  \Cref{lem:uxy}). Further, $\mathbb{1}_{\mathcal{X}}(\{x, w\}, z) = 1$ and, using \Cref{def:uxy}, $U_{\mathcal{X}}(x, z) = 1$.    Lastly, for $y \in X$, $p|^X_y = p_y/m(X)$.  We now conclude
\begin{displaymath}C_{\mathcal{X}}(x, w) = \sum_{y \in X}
\frac{\mathbb{1}_{X}(\{x, w\}, y)}{U_{X}(x, y)m(X)}p_y  +\frac{\out{x, w}{z}}{U_{\mathcal{X}}(x, z)}p_z = C_{X}(x, w) +p_z. \end{displaymath}
Lastly, as $\out{x, z}{y} = 0$ for all $y \in \mathcal{X}$, it is immediate that $C_{\mathcal{X}}(x, z)  = C_{\mathcal{X}}(z, x) = 0$. 
\end{example}

That the influence of an outlier is proportional to its mass, considered in the previous example, is also illustrated in \Cref{fig:fourgroups}.  In the case of multiple outliers, we have the following bounds, see also \Cref{fig:outlier}.  

\begin{figure}[htbp]
  \centering
  \label{fig:outlier}\includegraphics[scale = 1]{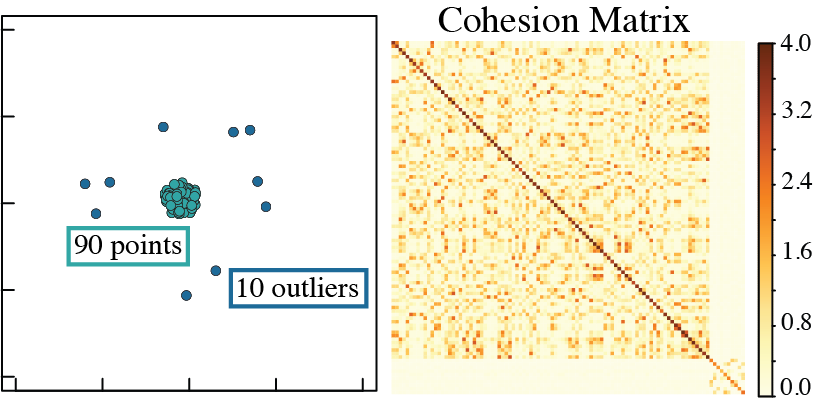}
  \caption{A set $\mathcal{X} \subseteq \mathbb{R}^2$ consisting of points drawn uniformly at random from the unit ball together with ten outliers. Here, the maximum cohesion of an outlier to a non-outlier is $0.093$; the upper bound on this quantity, given by \cref{prop:outliers_bound}, is in this case $0.111$.  }
\end{figure}

\begin{proposition}\label{prop:outliers_bound} Suppose $(\mathcal{X}, d, p)$ is a dissimilarity space and $\mathcal{X} = X \cup Z$ is such that  $\max_{x, x' \in X}\{d(x, x')\} <  \min_{x \in X, z \in Z} \{d(x, z)\}$.  Then, for any $x, w \in X$ and $z \in Z$, 
\begin{displaymath}   C_{\mathcal{X}}(x, z) \leq \frac{p_Z}{1 - p_Z} \text{ and } p_Z \leq C_{\mathcal{X}}(x, w) - C_{X}(x, w)   \leq \frac{p_Z}{1- p_Z}.\end{displaymath}
\end{proposition}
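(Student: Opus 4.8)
The plan is to reduce everything to a computation over the sets $X$ and $Z$ separately, exactly as in the preceding single-outlier example, but now keeping track of the contribution from opposing points $y$ that lie in the outlier block $Z$.  Fix $x, w \in X$.  Splitting the sum in \Cref{def:cohesion} over $y \in X$ and $y \in Z$, and using that $X$ is point-like with $\max_{x,x'\in X}d(x,x') < \min_{x\in X, z\in Z} d(x,z)$, I would first argue that for $y \in X$ the summand is exactly the corresponding summand of $C_X(x,w)$ rescaled: $\mathbb{1}_{\mathcal{X}}(\{x,w\},y) = \mathbb{1}_X(\{x,w\},y)$ and $U_{\mathcal{X}}(x,y) = m(X)\,U_X(x,y)$ by \Cref{lem:uxy}, while $p_y = m(X)\,p|^X_y$; so the $y\in X$ portion contributes precisely $C_X(x,w)$.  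Hence
\begin{displaymath} C_{\mathcal{X}}(x,w) - C_X(x,w) = \sum_{y \in Z} \frac{\mathbb{1}_{\mathcal{X}}(\{x,w\},y)}{U_{\mathcal{X}}(x,y)}\,p_y. \end{displaymath}
For each $y \in Z$ we have $\mathbb{1}_{\mathcal{X}}(\{x,w\},y) = 1$ (since $x,w$ are mutually closer than either is to $y$), so the remaining task is to bound $U_{\mathcal{X}}(x,y)$ for $y\in Z$, both above and below.

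The key estimates are: every point of $X$ is $(x,y)$-local (because $d(x,w') < d(x,y)$ for $w'\in X$ forces $\mathbb{1}_{\mathcal{X}}(\{x,w'\},y)=1$, contributing $p_{w'}$ to $U_{\mathcal{X}}(x,y)$), so $U_{\mathcal{X}}(x,y) \ge m(X) = 1 - p_Z$; and on the other hand $U_{\mathcal{X}}(x,y) \le m(\mathcal{X}) = 1$ trivially since it is a sum of $(\mathbb{1}_{\mathcal{X}}(\{x,z'\},y)+\mathbb{1}_{\mathcal{X}}(\{y,z'\},x))p_{z'}$ terms each at most $p_{z'}$ — actually, more carefully, the membership weight of each $z'$ is at most $1$, so $U_{\mathcal{X}}(x,y)\le m(\mathcal{X})=1$.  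Feeding these into the displayed identity: from below, $C_{\mathcal{X}}(x,w) - C_X(x,w) \ge \sum_{y\in Z} p_y/1 = p_Z$; from above, $C_{\mathcal{X}}(x,w) - C_X(x,w) \le \sum_{y\in Z} p_y/(1-p_Z) = p_Z/(1-p_Z)$.  This gives the second chain of inequalities.

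For the bound on $C_{\mathcal{X}}(x,z)$ with $z \in Z$, I would write $C_{\mathcal{X}}(x,z) = \sum_{y\in\mathcal{X}} \frac{\mathbb{1}_{\mathcal{X}}(\{x,z\},y)}{U_{\mathcal{X}}(x,y)}p_y$ and observe that $\mathbb{1}_{\mathcal{X}}(\{x,z\},y) = 0$ whenever $y \in X$ (since $x$ is closer to any point of $X$ than to $z$, the pair $\{x,z\}$ is never the closest pair in a triplet containing a point of $X$... one must check: for $y\in X$, is $d(x,z) < \min\{d(x,y),d(y,z)\}$? No, because $d(x,y) < d(x,z)$). So only $y \in Z$ contributes, and using $\mathbb{1}_{\mathcal{X}}(\{x,z\},y) \le 1$ together with $U_{\mathcal{X}}(x,y) \ge m(X) = 1 - p_Z$ for $y\in Z$, we get $C_{\mathcal{X}}(x,z) \le \sum_{y\in Z} p_y/(1-p_Z) = p_Z/(1-p_Z)$.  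The main obstacle I anticipate is the lower bound $U_{\mathcal{X}}(x,y) \ge 1-p_Z$ for $y \in Z$: one must verify carefully from \Cref{def:uxy} that the full mass $m(X)$ of the inner block really does register as $(x,y)$-local, i.e.\ that for every $w' \in X$ one of $\mathbb{1}_{\mathcal{X}}(\{x,w'\},y)$ or $\mathbb{1}_{\mathcal{X}}(\{y,w'\},x)$ equals $1$ — which holds in the dissimilarity-induced case precisely because $d(x,w') < d(x,y) \le \max\{d(x,y),d(w',y)\}$ forces $\mathbb{1}_{\mathcal{X}}(\{x,w'\},y)=1$ — and to handle $w' = x$ itself via $\mathbb{1}_{\mathcal{X}}(\{x,x\},y) = 1$.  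The rest is bookkeeping with the three-way normalization of \Cref{def:tripfunction}.
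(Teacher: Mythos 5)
Your proof is correct and follows essentially the same route as the paper's: the same split of the cohesion sum over $y \in X$ versus $y \in Z$, the same use of \Cref{lem:uxy} to identify the $y \in X$ portion with $C_{X}(x,w)$, and the same two-sided bound $1 - p_Z \le U_{\mathcal{X}}(x,y) \le 1$ for $y \in Z$. If anything, your explicit justification of the upper bound $U_{\mathcal{X}}(x,y) \le 1$ via the three-way normalization (needed for the lower bound $p_Z \le C_{\mathcal{X}}(x,w) - C_{X}(x,w)$) is a step the paper's write-up passes over.
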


\begin{proof} Suppose first that $x \in X$ and $z \in Z$.   In the case that $y \in Z$,  $\out{x, w}{y} = 1$ for any $w \in X$ and so \begin{equation}\label{eq:prop_uxybound} U_{\mathcal{X}}(x, y)  = \sum_{w \in \mathcal{X}} \left(\out{x, w}{y} + \out{y, w}{x}\right)p_w \ge \sum_{w \in X} \out{x, w}{y}p_w = 1 - p_Z. \end{equation} In the case that $y \in X$, we have $\out{x, z}{y} = 0$.  Together with  \eqref{eq:prop_uxybound}, we have 
\[C_{\mathcal{X}}(x, z) = \sum_{y \in X} \frac{\out{x, z}{y}}{U_{\mathcal{X}}(x, y)}p_y + \sum_{y \in Z} \frac{\out{x, z}{y}}{U_{\mathcal{X}}(x, y)} p_{y} \le \sum_{y \in Z} \frac{1}{1 - p_Z} p_{y} = \frac{p_Z}{1-p_Z}.\]
Now suppose that $x, w \in X$. Since for $y \in X$, we have $U_{X}(x, y)m(X) = U_{\mathcal{X}}(x, y)$ (see \cref{lem:uxy}) and $p|^{X}_y = \frac{p_y}{m(X)}$, we have
\[C_{X}(x, w) =  \sum_{y \in X} \frac{\mathbb{1}_{X}(\{x, w\}, y)}{U_{X}(x, y)} p|^{X}_y = \sum_{y \in X} \frac{\out{x, w}{y}}{U_{\mathcal{X}}(x, y)} p_y.\]
Now, using that $\out{x, w}{y} = 1$ for any $y \in Z$, we have
\begin{equation}\label{eq:prop_bound} C_{\mathcal{X}}(x, w) - C_{X}(x, w) = \sum_{y \in Z} \frac{\out{x, w}{y}}{U_{\mathcal{X}}(x, y)} p_{y} = \sum_{y \in Z} \frac{1}{U_{\mathcal{X}}(x, y)} p_{y}.\end{equation}
Using \eqref{eq:prop_uxybound}, we have $0 \leq \sum_{y \in Z} \frac{1}{U_{\mathcal{X}}(x, y)} p_y \leq \frac{p_Z}{1-p_Z}$, and the result now follows from \eqref{eq:prop_bound}.  
\end{proof}

We now formally define the property observed above that the influence of an outlier is proportional to its mass.  We then show that this property, together with the property of a constant average value, uniquely determine the cohesion function.

\begin{definition}\label{def:linearp} Given a dissimilarity space $(\mathcal{X}, d)$, we say that $x, z \in \mathcal{X}$ are \emph{mutually outlying} if for any $y \in \mathcal{X}$,  $d(x, z) > \min\{d(x, y), d(y, z)\}$.  We then  say that the influence of an outlier is \emph{proportional to its mass} if whenever $x, z \in \mathcal{X}$ are mutually outlying, $g_{\mathcal{X}}(x, z) = 0$ and 
$$g_{\mathcal{X}}(x, w) = \begin{cases} g_{\mathcal{X} \setminus \{z\}}(x, w) + p_z & \text{ if } d(w, x) < d(w, z) \\ 
g_{\mathcal{X} \setminus \{z\}}(x, w)  & \text{ if } d(w, x) > d(w, z).
\end{cases}$$
\end{definition}

We conclude this section with the main result that the cohesion function is the unique function with a constant average value and for which the influence of an outlier is proportional to its mass.

\begin{theorem}\label{thm:uniqueness} Given $(\mathcal{X}, d, p)$, the cohesion function (in Definition~\ref{def:cohesion}) is the unique real-valued function of $\mathcal{X}^2$ for which (i) the weighted average value is equal to $1/2$ and (ii) the influence of an outlier is proportional to its mass.  
\end{theorem}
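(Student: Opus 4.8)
The plan is to establish both directions: that cohesion satisfies (i) and (ii), and that it is the only function doing so, with uniqueness proved by induction on $n=|\mathcal{X}|$. Property (i) for cohesion is exactly Proposition~\ref{prop:half}, so the substantive work is (a) verifying that cohesion obeys property (ii), and (b) running the uniqueness induction using (i) and (ii). Existence (that $C_{\mathcal{X}}$ satisfies both) is what makes the word ``the'' in ``the unique function'' meaningful; once that is in hand, everything reduces to showing that no other function can satisfy the two properties.

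For (a), suppose $x$ and $z$ are mutually outlying. Then the $(x,z)$-local set is all of $\mathcal{X}$, so $U_{\mathcal{X}}(x,z)=1$, and since no $y$ makes $\mathbb{1}_{\mathcal{X}}(\{x,z\},y)=1$ we get $C_{\mathcal{X}}(x,z)=0$. For the recursion I would write $C_{\mathcal{X}}(x,w)=\sum_{y\ne z}\frac{\mathbb{1}_{\mathcal{X}}(\{x,w\},y)}{U_{\mathcal{X}}(x,y)}p_y+\frac{\mathbb{1}_{\mathcal{X}}(\{x,w\},z)}{U_{\mathcal{X}}(x,z)}p_z$; mutual outlyingness gives $\mathbb{1}_{\mathcal{X}}(\{x,w\},z)=\mathbb{1}(d(w,x)<d(w,z))$ and $U_{\mathcal{X}}(x,z)=1$, so the last summand is exactly $p_z\,\mathbb{1}(d(w,x)<d(w,z))$, while the remaining sum should be identified with $C_{\mathcal{X}\setminus\{z\}}(x,w)$ using Definition~\ref{def:subspace} and Lemma~\ref{lem:uxy} to reconcile the subspace renormalization with the fact that deleting an outlying $z$ does not alter the local masses $U_{\mathcal{X}}(x,y)$ that govern row $x$. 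This is precisely the bookkeeping carried out in the example preceding Proposition~\ref{prop:outliers_bound}.

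For (b), let $g$ be any real-valued function, defined on $\mathcal{X}^2$ for every finite dissimilarity space, satisfying (i) and (ii). The base case $n=1$ is forced: with $\mathcal{X}=\{x\}$ and $p_x=1$, property (i) gives $g_{\{x\}}(x,x)=\tfrac12=C_{\{x\}}(x,x)$. For the inductive step, assume $g=C$ on all spaces with fewer than $n$ points and let $|\mathcal{X}|=n\ge2$. Fixing $x\in\mathcal{X}$ and using the hypothesis that the elements of $\mathcal{X}$ are linearly orderable by dissimilarity from $x$, I would select a suitable outlying point $z$ (so that $(x,z)$ is mutually outlying and the reduction of part (a) applies); property (ii) then expresses the whole row $g_{\mathcal{X}}(x,\cdot)$ in terms of $g_{\mathcal{X}\setminus\{z\}}(x,\cdot)=C_{\mathcal{X}\setminus\{z\}}(x,\cdot)$, and since $C$ satisfies the identical recursion we conclude $g_{\mathcal{X}}(x,\cdot)=C_{\mathcal{X}}(x,\cdot)$. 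Ranging over all $x$ covers every entry of $g_{\mathcal{X}}$, so $g_{\mathcal{X}}=C_{\mathcal{X}}$ and the induction closes.

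I expect the main obstacle to be the reduction step for cohesion: showing that removing an outlying point $z$ sheds a clean $p_z$ (or $0$) from each affected entry of $C_{\mathcal{X}}$ requires controlling exactly how each local mass $U_{\mathcal{X}}(x,y)$ and the subspace normalization behave under the deletion, and this is where the precise hypotheses packaged into ``mutually outlying,'' Lemma~\ref{lem:uxy}, and Definition~\ref{def:subspace} must be combined carefully. A secondary point to watch is that the induction genuinely reaches every entry of $g_{\mathcal{X}}$ — one must be sure that for each choice of $x$ some point really can be peeled off so every row is handled, and that property (i), imposed for every mass distribution, is what anchors the recursion at the one-point space.
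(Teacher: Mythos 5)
Your architecture --- verify that $C_{\mathcal{X}}$ satisfies (i) and (ii), then prove uniqueness by induction on $|\mathcal{X}|$ by peeling off, for each row index $x$, a point $z$ mutually outlying with $x$ --- is genuinely different from the paper's argument. The paper performs no such induction: it posits a kernel representation $g_{\mathcal{X}}(x,w)=\sum_{y}f_{\mathcal{X}}(x,w,y)p_y$, evaluates properties (i) and (ii) at point-mass distributions to force a recursion on $f$, uses the constant-average constraint to isolate which symmetry $f$ must carry (eliminating two candidates with an explicit four-point configuration in $\mathbb{R}$), and only then solves for $f$. Your plan is more economical, and your remark that the existence half (that $C$ itself satisfies (ii)) is needed to make ``the unique function'' meaningful is well taken --- the paper never verifies that half.

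The genuine gap sits exactly where you predicted the main obstacle would be, and it does not go through. For a pair $\{x,z\}$ that is merely mutually outlying in the sense of \Cref{def:linearp}, it is false that deleting $z$ leaves the local masses governing row $x$ intact: although $\out{x,z}{y}=0$ for every $y$, the companion term $\out{y,z}{x}$ in \Cref{def:uxy} can equal $1$ when $z$ is close to $y$, so $z$ contributes mass to the $(x,y)$-local set and $U_{\mathcal{X}}(x,y)\neq(1-p_z)\,U_{\mathcal{X}\setminus\{z\}}(x,y)$; consequently $\sum_{y\neq z}\frac{\out{x,w}{y}}{U_{\mathcal{X}}(x,y)}p_y$ is not $C_{\mathcal{X}\setminus\{z\}}(x,w)$. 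Concretely, take $\mathcal{X}=\{0,10,11\}\subseteq\mathbb{R}$ with uniform mass, $x=0$, $z=11$: the pair $\{0,11\}$ is mutually outlying (for $y=10$, $\min\{10,1\}=1<11$), yet $U_{\mathcal{X}}(0,10)=1$ while $(1-p_z)U_{\{0,10\}}(0,10)=2/3$, and one computes $C_{\mathcal{X}}(0,0)=7/6$ whereas $C_{\{0,10\}}(0,0)+p_{11}=1+1/3=4/3$. So $C$ violates the recursion of \Cref{def:linearp} for this pair, and the bookkeeping of the example preceding \Cref{prop:outliers_bound} does not transfer: that example (via \Cref{lem:uxy}) relies on the much stronger hypothesis $\max_{x,x'\in X}d(x,x')<\min_{x\in X}d(x,z)$, which forces $\out{y,z}{x}=0$ for all $x,y\in X$. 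The failure propagates into your part (b): the induction would assign the value $4/3$ to the $(0,0)$ entry on $\{0,10,11\}$, so the closing identification $g_{\mathcal{X}}(x,\cdot)=C_{\mathcal{X}}(x,\cdot)$ cannot be made. Repairing your route requires either restricting property (ii) to points outlying with respect to the whole space (but then such a $z$ need not exist for every $x$, which your induction needs) or retreating to the paper's kernel-level analysis.
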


\begin{proof} Suppose that for any $(\mathcal{X}, d, p)$, the function $g_{\mathcal{X}}: \mathcal{X}^2 \rightarrow \mathbb{R}$ is such that (i) the weighted average value of $g_{\mathcal{X}}$ over $\mathcal{X}^2$ is equal to $1/2$  
and (ii) the influence of an outlier is proportional to its probability mass in the sense of \Cref{def:linearp}.  In particular, in the case of $\mathcal{X}_2 = \{x_1, x_2\}$ and any probability mass function, $p$, the pair $\{x_1, x_2\}$ is mutually outlying and so by (ii), $g(x_1, x_2) = g(x_2, x_1) = 0$.  
Since by (i), that $\sum_{x, w \in \mathcal{X}} g_{\mathcal{X}}(x, w)p_x p_w = \frac{1}{2}$ holds for any probability mass function, if $g_{\mathcal{X}}(x, w)$ did not depend on $p$ it must be that $g_{\mathcal{X}}(x, w) + g_{\mathcal{X}}(w, x) = 1$ for any $x, w \in \mathcal{X}$.  However, this is not the case, since we just observed that for $\mathcal{X}_2$, we have $g(x_1, x_2) + g(x_2, x_1) = 0$.  Let us now write 
\begin{equation}\label{eq:thm_gf} g_{\mathcal{X}}(x, w) = \sum_{y \in \mathcal{X}} f_{\mathcal{X}}(x, w, y) p_y,\end{equation}
 for some unknown function $f_{\mathcal{X}}: \mathcal{X}^3 \rightarrow \mathbb{R}$ of $(\mathcal{X}, d, p)$.  

Given $(\mathcal{X}, d)$ suppose $w_1, z_1, z_2 \in \mathcal{X}$ are such that $\{z_1, z_2\}$ is a mutually outlying pair.  By property (ii), we must have $g_{\mathcal{X}}(z_1, z_2) = \sum_{y \in \mathcal{X}} f_{\mathcal{X}}(z_1, z_2, y)p_y = 0$ for any probability measure, $p$, associated with this $(\mathcal{X}, d)$.  Specifically, the previous statement must hold for the probability mass functions of the form $p^{(x)}_y = \mathbf{1}(y = x)$.  We now conclude $f_{\mathcal{X}}(z_1, z_2, y) = 0 \text{ for each } y \in \mathcal{X}$.

Now for an arbitrary fixed $w_1 \in \mathcal{X}$, we first consider the case in which $d(w_1, z_1) < d(w_1, z_2)$.  Again by property (ii), we must have $g_{\mathcal{X}}(z_1, w_1) = g_{\mathcal{X} \setminus \{z_2\}}(z_1, w_1) + p_{z_2}$.  Since, for a given probability mass function, $p$, the induced probability measure on $\mathcal{X} \setminus \{z_2\}$ is $(p|^{\mathcal{X} \setminus \{z_2\}})_y = \frac{p_y}{1 - p_{z_2}}$, expanding we obtain
\begin{eqnarray*} 0 &=& g_{\mathcal{X}}(z_1, w_1) - (g_{\mathcal{X}\setminus \{z_2\}}(z_1, w_1) + p_{z_2}) \\
&=&  \sum_{y \in \mathcal{X}} f_{\mathcal{X}}(z_1, w_1, y) p_y - \left( \sum_{y \in \mathcal{X} \setminus \{z_2\}} f_{\mathcal{X} \setminus \{z_2\}}(z_1, w_1, y) \frac{p_y}{1 - p_{z_2}}  + p_{z_2} \right) \\
&=& (f_{\mathcal{X}}(z_1, w_1, z_2) - 1)p_{z_2} + \sum_{y \in \mathcal{X} \setminus \{z_2\}} \left(f_{\mathcal{X}}(z_1, w, y) - \frac{f_{\mathcal{X} \setminus \{z_2\}}(z_1, w_1, y)}{1 - p_{z_2}}\right) p_y. 
\end{eqnarray*}
As the previous above expression holds regardless of the probability mass function, $p$, on $(\mathcal{X}, d)$, using again the probability mass functions of the form $p^{(x)}_y = \mathbf{1}(y = x)$,  each of the above coefficients must be zero. Therefore, in the case that $\{z_1, z_2\}$ are mutually outlying and $d(w_1, z_1) < d(w_1, z_2)$, we have $f_{\mathcal{X}}(z_1, w_1, z_2) = 1$ and $f_{\mathcal{X}}(z_1, w_1, y) = \frac{f_{\mathcal{X} \setminus \{z_2\}}(z_1, w_1, y)}{1 - p_{z_2}}$ for any $ y \in \mathcal{X} \setminus \{z_2\}$.  

Lastly, consider the case in which $w_1 \in \mathcal{X}$ satisfies $d(w_1, z_1) > d(w_1, z_2)$.  Again by property (ii), we must have $g_{\mathcal{X}}(z_1, w_1) = g_{\mathcal{X} \setminus \{z_2\}}(z_1, w_1)$.  Expanding, as above, we obtain
\begin{eqnarray*} 0 &=& g_{\mathcal{X}}(z_1, w_1) - g_{\mathcal{X}\setminus \{z_2\}}(z_1, w_1) \\
&=&  \sum_{y \in \mathcal{X}} f_{\mathcal{X}}(z_1, w_1, y) p_y -\sum_{y \in \mathcal{X} \setminus \{z_2\}} f_{\mathcal{X} \setminus \{z_2\}}(z_1, w_1, y) \frac{p_y}{1 - p_{z_2}}  \\
&=& f_{\mathcal{X}}(z_1, w_1, z_2) p_{z_2} + \sum_{y \in \mathcal{X} \setminus \{z_2\}} \left(f_{\mathcal{X}}(z_1, w, y) - \frac{f_{\mathcal{X} \setminus \{z_2\}}(z_1, w_1, y)}{1 - p_{z_2}}\right) p_y. 
\end{eqnarray*}
As the previous expression holds for any $p$ on $(\mathcal{X}, d)$, each of the above coefficients must be zero.  In summary, if $\{z_1, z_2\}$ are mutually outlying $f(z_1, z_2, y) = 0$ for all $y \in \mathcal{X}$ and 
\begin{equation}\label{eq:thm_fxsetminusz_summary} f_{\mathcal{X}}(z_1, w_1, y) = 
\begin{cases}  \frac{f_{\mathcal{X} \setminus \{z_2\}}(z_1, w_1, y)}{1 - p_{z_2}}  & \text{if } y \in \mathcal{X} \setminus \{z_2\}, \\
1 & \text{if }  y = z_2 \text{ and } d(w_1, z_1) < d(w_1, z_2), \\
0 & \text{if } y = z_2 \text{ and } d(w_1, z_1) > d(w_1, z_2). \\
\end{cases}\end{equation}

Consider now an arbitrary $(\mathcal{X}, d)$ and $x \in \mathcal{X}$. Index the elements of $\mathcal{X}$ according to their dissimilarity from $x$, and so $d(x, z_n) > d(x, z_{n-1}) > ... > d(x, x)$ and $z_1 = x$.    For ease of notation, for any index $1 \leq i \leq n$, we write $\mathcal{X}_i = \{z_1, z_2, ..., z_i\}$ and note that the probability mass function associated with $\mathcal{X}_i$ is $(p^{|\mathcal{X}_i})_{z_l} = \frac{p_{z_l}}{m(\mathcal{X}_i)}$ and $m(\mathcal{X}_i) = \sum_{k = 1}^i p_{z_k}$.  

In the case that $j > 1$, we see that, since $x$ and $z_j$ are mutually outlying in the set $\mathcal{X}_j$ as $d(x, z_j) > d(x, y)$ for all $y \in \mathcal{X}_j \setminus \{z_j\}$.  By \eqref{eq:thm_fxsetminusz_summary}, it follows that
\begin{equation}\label{eq:thm_fxxxj} \frac{f_{\mathcal{X}_j}(x, x, x)}{m(\mathcal{X}_j)} =\left(\frac{1}{m(\mathcal{X}_j)}\right)\left(\frac{f_{\mathcal{X}_{j-1}}(x, x, x)}{1 - (p^{|{\mathcal{X}_j}})_{z_j}}\right) =  \frac{f_{\mathcal{X}_{j-1}}(x, x, x)}{m(\mathcal{X}_j)(1 - \frac{p_{z_j}}{m(\mathcal{X}_j)})} = \frac{f_{\mathcal{X}_{j-1}}(x, x, x)}{m(\mathcal{X}_{j-1})}.\end{equation}
Note that for the set $\mathcal{X}_1 = \{x\}$, since by property (i) the average value of $g_{\mathcal{X}_1}$ must be $1/2$,  $f_{\mathcal{X}_1}(x, x, x) = \frac{1}{2}$.  Applying the argument as in \eqref{eq:thm_fxxxj} $n-1$ times, we obtain
\begin{equation}\label{eq:thm_fxxx} f_{\mathcal{X}}(x, x, x) = \frac{f_{\mathcal{X}_n}(x, x, x)}{m(\mathcal{X}_n)} = \frac{f_{\mathcal{X}_1}(x, x, x)}{m(\mathcal{X}_1)} = \frac{1}{2p_x}.\end{equation}

Since by (i), the weighted average value of $g_{\mathcal{X}}$ is equal to $1/2$ and so, with  \eqref{eq:thm_gf}, we have  $\sum_{x, w, y \in \mathcal{X}} f_{\mathcal{X}}(x, w, y) p_x p_w p_y = \frac{1}{2}$. Further, $f_{\mathcal{X}}(x, x, x)p_x = \frac{1}{2}$ for each $x \in \mathcal{X}$.  
As dissimilarity comparisons are independent of the probability mass function and these two expressions hold for any $p$, at least one of the following symmetries must be present: 
\begin{enumerate}
\item For any $x, w \in \mathcal{X}$, we have $\sum_{y \in \mathcal{X}} (f_{\mathcal{X}}(x, w, y) + f_{\mathcal{X}}(w, x, y)) p_y = 1$. 
\item For any $w, y \in \mathcal{X}$, we have $\sum_{x \in \mathcal{X}} (f_{\mathcal{X}}(x, w, y) + f_{\mathcal{X}}(x, y, w))p_x = 1$. 
\item For any $x, y \in \mathcal{X}$, we have $\sum_{w \in \mathcal{X}} (f_{\mathcal{X}}(x, w, y) + f_{\mathcal{X}}(y, w, x)) p_w =1$.
\end{enumerate}
It is clear that it is not the first since, when $x$ and $w$ are mutually outlying, 
\begin{displaymath}\sum_{y \in \mathcal{X}} (f_{\mathcal{X}}(x, w, y) + f_{\mathcal{X}}(w, x, y)) p_y = g_{\mathcal{X}}(x, w) + g_{\mathcal{X}}(w, x) = 0.\end{displaymath}
We show that for $\mathcal{X} = \{0, 4, 6, 11\} \subseteq \mathbb{R}$ that the second statement does not hold for $w = x_2$ and $y = x_4$.  Since $\{x_1, x_4\}$ are mutually outlying, by \eqref{eq:thm_fxsetminusz_summary} $f_{\mathcal{X}}(x_1, x_4, x_2) = 0$, and further since $d(x_2, x_1) < d(x_2, x_4)$ we have $f_{\mathcal{X}}(x_1, x_2, x_4) = 1$.  Similarly, since $\{x_2, x_4\}$ are mutually outlying,  again by \eqref{eq:thm_fxsetminusz_summary}, $f_{\mathcal{X}}(x_2, x_4, x_2)  = f_{\mathcal{X}}(x_4, x_2, x_4)= 0$ and $f_{\mathcal{X}}(x_2, x_2, x_4) = f_{\mathcal{X}}(x_4, x_4, x_2) = 1$.  
Hence,
$f(x, x_2, x_4) + f(x, x_4, x_2) = 1 \text{ for each } x \in \{x_1, x_2, x_4\}$.

Now, observe that $\{x_3, x_4\}$ are mutually outlying in $\mathcal{X} \setminus \{x_1\}$, and so $f_{\mathcal{X} \setminus \{x_1\}}(x_3, x_4, x_2) = 0$.  Further still, since $d(x_2, x_3) < d(x_2, x_4)$ we have $f_{\mathcal{X} \setminus \{x_1\}}(x_3, x_2, x_4) = 1$.  Now, since $\{x_3, x_1\}$ are mutually outlying and both $d(x_2, x_3) < d(x_2, x_1)$ and $d(x_4, x_3) < d(x_4, x_1)$,
leveraging again \eqref{eq:thm_fxsetminusz_summary}, we obtain
\[f_{\mathcal{X}}(x_3, x_2, x_4) = \frac{f_{\mathcal{X} \setminus \{x_1\}}(x_3, x_2, x_4)}{1 - p_1} = \frac{1}{1- p_1} \text{ and } f_{\mathcal{X}}(x_3, x_4, x_2) = \frac{f_{\mathcal{X} \setminus \{x_1\}}(x_3, x_4, x_2)}{1 - p_1} = 0.\]
Hence, 
$$\sum_{x \in \mathcal{X}} \left(f_{\mathcal{X}}(x, x_2, x_4) + f_{\mathcal{X}}(x, x_4, x_2) \right)p_x = p_1 + p_2 + \frac{p_3}{1-p_1} + p_4 \neq 1.$$

As the two other possible symmetries have been eliminated, we may now conclude that, for any $x, y \in \mathcal{X}$, 
\begin{equation} \label{eq:thm_symmetry} \sum_{w \in \mathcal{X}} (f_{\mathcal{X}}(x, w, y) + f_{\mathcal{X}}(y, w, x))p_w = 1.\end{equation}

Suppose now that $x, w, y \in \mathcal{X}$ are such that $d(x, w) > \min\{d(x, y), d(w, y)\}$.   Index the elements in $\mathcal{X}$ such that $d(x, z_n) > d(x, z_{n-1}) > ... > d(x, z_1)$.  Write $z_m = w$, $z_M = y$ and note that $z_1 = x$.  As above, let $\mathcal{X}_j = \{z_1, z_2, \dots, z_j\}$.  Note additionally that $x, w, y \in \mathcal{X}_{\max\{m, M\}}$, and thus $x, w, y \in \mathcal{X}_j$ for all $j \ge \max\{m, M\}$.  

Suppose that $j > \max\{m, M\}$.  Further, since $d(x, z_j) > d(x, z_i)$ for all $z_i \in \mathcal{X}_j \setminus \{z_j\}$, the pair $\{x, z_j\}$ is mutually outlying in $\mathcal{X}_j$ and $z_j \neq y$.  Now, by \eqref{eq:thm_fxsetminusz_summary}, it follows that
\begin{equation}\label{eq:thm_fxwyj} \frac{f_{\mathcal{X}_j}(x, w, y)}{m(\mathcal{X}_j)} =\left(\frac{1}{m(\mathcal{X}_j)}\right)\left(\frac{f_{\mathcal{X}_{j-1}}(x, w, y)}{1 - (p^{|{\mathcal{X}_j}})_{z_j}}\right) =  \frac{f_{\mathcal{X}_{j-1}}(x, w, y)}{m(\mathcal{X}_j)(1 - \frac{p_{z_j}}{m(\mathcal{X}_j)})} = \frac{f_{\mathcal{X}_{j-1}}(x, w, y)}{m(\mathcal{X}_{j-1})}.\end{equation}

Repeating the previous argument $n - \max\{m, M\}$ times, we obtain
\begin{equation} \label{eq:thm_maxmm} f_{\mathcal{X}}(x, w, y) = \frac{f_{\mathcal{X}_{\max\{m, M\}}}(x, w, y)}{m(\mathcal{X}_{\max\{m, M\}})}.\end{equation}

Consider first the case in which $m > M$.  First, as $z_m = w$, we have $d(x, w) > d(x, z_i)$ for all $1 \leq i < m$.  Therefore, $\{x, w\}$ is mutually outlying in $\mathcal{X}_{\max\{m, M\}} = \mathcal{X}_m$.  Hence, by \eqref{eq:thm_fxsetminusz_summary}, $f_{\mathcal{X}_{\max\{m, M\}}}(x, w, y) =0$, and so using
\eqref{eq:thm_maxmm}, we have $f_{\mathcal{X}}(x, w, y) = 0$.  

Now consider the case in which $M > m$.  First, as $z_M = y$, we have $d(x, y) > d(x, z_i)$ for all $1 \leq i < M$.  Therefore, $\{x, y\}$ is mutually outlying in $\mathcal{X}_{\max\{m, M\}} = \mathcal{X}_M$. Since $d(x, y) > d(x, w)$, using our original assumption that $d(x, w) > \min\{d(x, y), d(w, y)\}$, we conclude that $d(w, x) > d(w, y)$.  Hence, by \eqref{eq:thm_fxsetminusz_summary}, $f_{\mathcal{X}_{\max\{m, M\}}}(x, w, y) =0$, and so using
\eqref{eq:thm_maxmm}, we have $f_{\mathcal{X}}(x, w, y) = 0$.  In conclusion, if $x, w, y \in \mathcal{X}$ are such that $d(x, w) > \min\{d(x, y), d(w, y)\}$, we have $f_{\mathcal{X}}(x, w, y) = 0$.  
  
For ease of notation, for $x, w, y \in \mathcal{X}$ let us define
\begin{equation} \label{eq:thm_defout} \mathbf{1}_d(x, w, y) = \begin{cases} 1 & \text{ if } d(x, w) < \min\{d(x, y), d(w, y)\} \\ \alpha & \text{ if } x = w = y \\  0 & \text{ otherwise.}\end{cases}\end{equation} 
where $\alpha$ is some nonzero real number.  
By  \eqref{eq:thm_maxmm}, when $\mathbf{1}_d(x, w, y) = 0$ we have $f_{\mathcal{X}}(x, w, y) = 0$. Further, since provided that $y \neq x$, at most one of $\mathbf{1}_d(x, w, y)$ and $\mathbf{1}_d(y, w, x)$ are nonzero, there is a function $h_{\mathcal{X}}: \mathcal{X}^3 \rightarrow \mathbb{R}$ such that $h_{\mathcal{X}}(x, w, y) = h_{\mathcal{X}}(y, w, x)$ and
\begin{equation}\label{eq:thm_fh} f_{\mathcal{X}}(x, w, y) = \mathbf{1}_d(x, w, y) h_{\mathcal{X}}(x, w, y).\end{equation} Substituting this expression into \eqref{eq:thm_symmetry} and leveraging the symmetry of $h_{\mathcal{X}}$, for any $x,y \in \mathcal{X}$, 
$$\sum_{w \in \mathcal{X}} (\mathbf{1}_d(x, w, y) + \mathbf{1}_d(y, w, x))h_{\mathcal{X}}(x, w, y)p_w = 1.$$
For a fixed $(\mathcal{X}, d)$ and $x, y \in \mathcal{X}$,  $\mathbf{1}_d(x, w, y) + \mathbf{1}_d(y, w, x)$ is a constant which does not depend on the probability mass function, $p$.  Further, as the above equality must hold regardless of $p$ and for any fixed $x, y \in \mathcal{X}$, the function $h_{\mathcal{X}}(x, w, y)$ does not depend on $w$.  In particular, it must be the case that
$h_{\mathcal{X}}(x, w, y) = (\sum_{z \in \mathcal{X}}(\mathbf{1}_d(x, z, y) + \mathbf{1}_d(y, z, x))p_z)^{-1}$. Together with \eqref{eq:thm_fh}, we conclude that 
\begin{equation}\label{eq:thm_ff} f_{\mathcal{X}}(x, w, y) = \frac{\mathbf{1}_d(x, w, y)}{\sum_{z \in \mathcal{X}}(\mathbf{1}_d(x, z, y) + \mathbf{1}_d(y, z, x))p_z}.\end{equation}
If the denominator in the above expression were zero, since $\mathbf{1}_d(y, y, x) = 1$, it must be the case that $p_y = 0$, and thus, for the purposes of \eqref{eq:thm_gf}, we have $f_{\mathcal{X}}(x, w, y)p_y = 0$. Note that $\mathbf{1}_d(x, w, x) \neq 0$ if and only if $w = x$.  In such a case, the above expression is
$f_{\mathcal{X}}(x, x, x) = \frac{\alpha}{2 \alpha p_x} = \frac{1}{2 p_x}$, and thus does not depend on the choice of $\alpha$.  In particular, we can replace $\mathbf{1}_d(x, w, y)$ with the similarity comparison function, $\mathbb{1}_{d}(\{x, w\}, y)$, as in \Cref{def:Ind_Xd}.  
Pulling together \eqref{eq:thm_gf} and \eqref{eq:thm_ff}, we conclude $g_{\mathcal{X}}(x, w) = C_{\mathcal{X}}(x, w)$ as in \Cref{def:cohesion}.  
\end{proof}

\section{Future Directions}  As we have seen throughout, cohesion is a new measure of relative proximity that which allows highly concentrated, or point-like, sets to take on qualities of a single weighted point.  It is our hope that this self-contained exploration can facilitate the development of cohesion-based methods in exploratory data analysis (e.g., clustering, classification, low dimensional embedding, imputation) which can complement existing distance-based approaches.

\section{Appendix}\label{appendix}
\appendix

\begin{proof}[Proof of \Cref{lem:uxy}] Consider first the case that $x, y \in X_i$ for some $1 \leq i \leq n$.  Since $X_i$ is point-like, using also \Cref{def:subspace}, we have $$\mathbb{1}_{\mathcal{X}}(\{x, w\}, y) = \begin{cases}
\mathbb{1}_{X_i}(\{x, w\}, y) & \text{ if } w \in X_i \\
0 & \text{ if } w \in \mathcal{X} \setminus X_i. \end{cases}$$ Further, as the probability mass function for $X_i$ is $p|^{X_i}_x = \frac{p_x}{m(X_i)}$, it now follows that
\begin{eqnarray*} U_{\mathcal{X}}(x, y) &=& \sum_{w \in \mathcal{X}} \left(\mathbb{1}_{\mathcal{X}}(\{x, w\}, y) + \mathbb{1}_{\mathcal{X}}(\{y, w\}, x)\right)p_w\\
&=& m(X_i) \sum_{w \in X_i} \left(\mathbb{1}_{X_i}(\{x, w\}, y) + \mathbb{1}_{X_i}(\{y, w\}, x)\right)\frac{p_w}{m(X_i)}\\
&=& m(X_i) \sum_{w \in X_i} \left(\mathbb{1}_{X_i}(\{x, w\}, y) + \mathbb{1}_{X_i}(\{y, w\}, x)\right)p|^{X_i}_w\\
&=& m(X_i) U_{X_i}(x, y)\end{eqnarray*}

Consider now the case that $x \in X_i$ and $y \in X_j$ for some $i \neq j$.  
Given $w \in \mathcal{X}$,  $w \in X_k$ for some $1 \leq k \leq N$. By \cref{def:quotient}, $\mathbb{1}_{\mathcal{X}}(\{x, w\}, y) = \mathbb{1}_{\bar{\mathcal{X}}}(\{\bar{x}_i, \bar{x}_k\}, \bar{x}_j)$ and likewise $\mathbb{1}_{\mathcal{X}}(\{y, w\}, x) = \mathbb{1}_{\bar{\mathcal{X}}}(\{\bar{x}_j, \bar{x}_k\}, \bar{x}_i)$ and $\bar{p}_{\bar{x}_k} = \sum_{w \in X_k} p_w$. It now follows that 
\begin{align*} U_{\mathcal{X}}(x, y) &=&  \sum_{k = 1}^{N} \sum_{w \in X_k}  \left(\mathbb{1}_{\mathcal{X}}(\{x, w\}, y) + \mathbb{1}_{\mathcal{X}}(\{y, w\}, x)\right)p_w\\
&=& \sum_{k = 1}^{N}  \left(\mathbb{1}_{\bar{\mathcal{X}}}(\{\bar{x}_i, \bar{x}_k\}, \bar{x}_j) + \mathbb{1}_{\bar{\mathcal{X}}}(\{\bar{x}_j, \bar{x}_k\}, \bar{x}_i)\right)\sum_{w \in X_k} p_{w}\\
&=& \sum_{k = 1}^{N}  \left(\mathbb{1}_{\bar{\mathcal{X}}}(\{\bar{x}_i, \bar{x}_k\}, \bar{x}_j) + \mathbb{1}_{\bar{\mathcal{X}}}(\{\bar{x}_j, \bar{x}_k\}, \bar{x}_i)\right)\bar{p}_{\bar{x}_k}\\
&=& U_{\bar{\mathcal{X}}}(\bar{x}_i, \bar{x}_j).
\end{align*}

\end{proof}

\bibliographystyle{siamplain}
\bibliography{Moore_references}

\begin{thebibliography}{10}

\bibitem{ali2022efficient}
{\sc S.~Ali, M.~Ahmad, U.~U. Hassan, M.~A. Khan, S.~Alam, and I.~Khan}, {\em
  Efficient data analytics on augmented similarity triplets}, in 2022 IEEE
  International Conference on Big Data (Big Data), IEEE, 2022, pp.~5871--5880.

\bibitem{amid2015multiview}
{\sc E.~Amid and A.~Ukkonen}, {\em Multiview triplet embedding: Learning
  attributes in multiple maps}, in International Conference on Machine
  Learning, PMLR, 2015, pp.~1472--1480.

\bibitem{baron2021partitioned}
{\sc J.~D. Baron, R.~Darling, J.~L. Davis, and R.~Pettit}, {\em Partitioned
  k-nearest neighbor local depth for scalable comparison-based learning}, arXiv
  preprint arXiv:2108.08864,  (2021).

\bibitem{pald22}
{\sc K.~S. Berenhaut, K.~E. Moore, and R.~L. Melvin}, {\em A social perspective
  on perceived distances reveals deep community structure}, Proceedings of the
  National Academy of Sciences, 119 (2022).

\bibitem{darling2023rank}
{\sc R.~Darling, W.~Grilliette, and A.~Logan}, {\em Rank-based linkage i:
  triplet comparisons and oriented simplicial complexes}, arXiv preprint
  arXiv:2302.02200,  (2023).

\bibitem{ghoshdastidar2019foundations}
{\sc D.~Ghoshdastidar, M.~Perrot, and U.~von Luxburg}, {\em Foundations of
  comparison-based hierarchical clustering}, Advances in Neural Information
  Processing Systems, 32 (2019), pp.~7456--7466.

\bibitem{haghiri2017comparison}
{\sc S.~Haghiri, D.~Ghoshdastidar, and U.~von Luxburg}, {\em Comparison-based
  nearest neighbor search}, in Artificial Intelligence and Statistics, PMLR,
  2017, pp.~851--859.

\bibitem{heikinheimo2013crowd}
{\sc H.~Heikinheimo and A.~Ukkonen}, {\em The crowd-median algorithm}, in
  Proceedings of the AAAI Conference on Human Computation and Crowdsourcing,
  vol.~1, 2013.

\bibitem{kleinberg2002impossibility}
{\sc J.~Kleinberg}, {\em An impossibility theorem for clustering}, Advances in
  neural information processing systems, 15 (2002).

\bibitem{kleindessner2017kernel}
{\sc M.~Kleindessner and U.~von Luxburg}, {\em Kernel functions based on
  triplet comparisons}, in NIPS, 2017.

\bibitem{kleindessner2017lens}
{\sc M.~Kleindessner and U.~Von~Luxburg}, {\em Lens depth function and
  k-relative neighborhood graph: versatile tools for ordinal data analysis},
  The Journal of Machine Learning Research, 18 (2017), pp.~1889--1940.

\bibitem{mojsilovic2019relative}
{\sc S.~Mojsilovic and A.~Ukkonen}, {\em Relative distance comparisons with
  confidence judgements}, in Proceedings of the 2019 SIAM International
  Conference on Data Mining, SIAM, 2019, pp.~459--467.

\bibitem{rendsburg2021comparison}
{\sc L.~Rendsburg and D.~Garreau}, {\em Comparison-based centrality measures},
  International Journal of Data Science and Analytics, 11 (2021), pp.~243--259.

\bibitem{schultz2003learning}
{\sc M.~Schultz and T.~Joachims}, {\em Learning a distance metric from relative
  comparisons}, Advances in neural information processing systems, 16 (2003).

\bibitem{tamuz2011adaptively}
{\sc O.~Tamuz, C.~Liu, S.~Belongie, O.~Shamir, and A.~T. Kalai}, {\em
  Adaptively learning the crowd kernel}, in Proceedings of the 28th
  International Conference on International Conference on Machine Learning,
  2011, pp.~673--680.

\bibitem{terada2014local}
{\sc Y.~Terada and U.~Luxburg}, {\em Local ordinal embedding}, in International
  Conference on Machine Learning, PMLR, 2014, pp.~847--855.

\bibitem{ukkonen2017crowdsourced}
{\sc A.~Ukkonen}, {\em Crowdsourced correlation clustering with relative
  distance comparisons}, in IEEE International Conference on Data Mining, IEEE,
  2017, pp.~1117--1122.

\bibitem{van2012stochastic}
{\sc L.~Van Der~Maaten and K.~Weinberger}, {\em Stochastic triplet embedding},
  in 2012 IEEE International Workshop on Machine Learning for Signal
  Processing, IEEE, 2012, pp.~1--6.

\bibitem{wilber2014cost}
{\sc M.~Wilber, I.~Kwak, and S.~Belongie}, {\em Cost-effective hits for
  relative similarity comparisons}, in Proceedings of the AAAI Conference on
  Human Computation and Crowdsourcing, vol.~2, 2014, pp.~227--233.

\end{thebibliography}

 \end{document}